\newcommand{\DA}{\textrm{DA}\xspace}
\newcommand{\NDA}{\textrm{NDA}\xspace}
\newcommand{\RA}{\textrm{RA}\xspace}
\newcommand{\sDA}{\textrm{sDA}\xspace}
\newcommand{\sNDA}{\textrm{sNDA}\xspace}
\newcommand{\pDA}{\textrm{pDA}\xspace}
\newcommand{\pNDA}{\textrm{pNDA}\xspace}
\newcommand{\LTL}{\textrm{LTL}\xspace}
\newcommand{\BDLTL}{\textrm{BD-LTL}\xspace}
\newcommand{\pBDLTL}{\textrm{BD-LTL}\ensuremath{^-}\xspace}
\newcommand{\fBDLTL}{\textrm{BD-LTL}\ensuremath{^+}\xspace}
\newcommand{\LRV}{\textrm{LRV}\xspace}
\newcommand{\NDLTL}{\textrm{ND-LTL}\xspace}
\newcommand{\class}[2]{\mathsf{cl}(#1,#2)}
\newcommand{\str}[1]{\mathsf{str}(#1)}
\renewcommand{\vec}[1]{\mathbf{#1}}
\newcommand{\U}{\ensuremath{\operatorname{U}}\xspace}
\renewcommand{\S}{\ensuremath{\operatorname{S}}\xspace}
\newcommand{\X}{\ensuremath{\operatorname{X}}\xspace}
\newcommand{\Y}{\ensuremath{\operatorname{Y}}\xspace}
\newcommand{\G}{\ensuremath{\operatorname{G}}\xspace}
\newcommand{\F}{\ensuremath{\operatorname{F}}\xspace}
\newcommand{\C}{\ensuremath{\operatorname{C}}\xspace}
\newcommand{\cupdot}{\mathrel{\dot{\cup}}}
\title{Ordered Navigation on Multi-attributed Data~Words\thanks{This work is partially supported by EGIDE/DAAD-Procope (LeMon).}\vspace{-0.6em}}
\author{Normann Decker\inst{1} \and
  Peter Habermehl\inst{2}  \and Martin Leucker\inst{1} 
  \and Daniel Thoma\inst{1}\vspace{-0.5em} }
\institute{
  ISP, University of L\"ubeck, Germany\\
  \texttt{\{decker,leucker,thoma\}@isp.uni-luebeck.de}\\[0.5em]
\and
  Univ Paris Diderot, Sorbonne Paris Cit\'e, LIAFA, CNRS, France\\
  \texttt{peter.habermehl@liafa.univ-paris-diderot.fr}
\vspace{-1em}}
\begin{document}

\maketitle

\begin{abstract}
We study temporal logics and automata on multi-attributed data words.
Recently, BD-LTL was introduced as a temporal logic on data words extending LTL by navigation along positions of single data values.
As allowing for navigation wrt.\ tuples of data values renders the logic undecidable, 
we introduce ND-LTL, an extension of BD-LTL by a restricted form 
of tuple-navigation.
While complete ND-LTL is still undecidable, the two natural fragments allowing for either future or past navigation along data values are shown to be Ackermann-hard, yet decidability is obtained by reduction to nested multi-counter systems.
To this end, we introduce and study nested variants of data automata as an intermediate model simplifying the constructions.
To complement these results we show that imposing the same restrictions on BD-LTL yields two 2ExpSpace-complete fragments while satisfiability for the full logic is known to be as hard as reachability in Petri nets.
\end{abstract}

\section{Introduction}
Executions of object-oriented and concurrent systems can naturally be modeled using data words.
They are composed of labels from a finite alphabet together with a data value from an infinite domain.
They can, for example, be considered as an interleaving of actions of an unbounded number of objects or processes, distinguished by identifiers.
Recently, several formalisms based on first-order logic \cite{DBLP:journals/tocl/BojanczykDMSS11,DBLP:journals/corr/abs-1110-1439}
or temporal logic \cite{DBLP:journals/iandc/DemriLN07,DBLP:journals/tocl/DemriL09,DBLP:conf/lics/DemriFP13} have been proposed to specify properties over data words.
Automata-based models have also been considered \cite{DBLP:journals/tocl/NevenSV04,DBLP:journals/tcs/KaminskiF94,DBLP:journals/iandc/BouyerPT03,DBLP:journals/tcs/BjorklundS10} including data automata (DA) \cite{DBLP:journals/tocl/BojanczykDMSS11}.
Usually, in these formalisms the data values can only be
compared with respect to equality. 
More expressive relations like ordering
lead fast to undecidability.
The automata/logic connection has been studied extensively.
For example, the satisfiability problem of 
two-variable first-order logic over data words was shown decidable
by a reduction to the emptiness problem of DA \cite{DBLP:journals/tocl/BojanczykDMSS11}.
They consist of a finite-state letter-to-letter
transducer $\mathcal{A}$ and a class automaton $\mathcal{B}$.
$\mathcal{A}$ changes the labels from the finite alphabet
of the input data word before the data word is projected into 
class strings (one for each different data value) which must all be accepted by $\mathcal{B}$.
Emptiness of \DA was proven decidable by a reduction to 
the reachability problem in multi-counter systems (Petri nets, VASS)
showing a deep connection between data word formalisms and
counter systems (see also \cite{DBLP:conf/lics/DemriFP13,DBLP:conf/fossacs/TzevelekosG13,DBLP:conf/mfcs/BjorklundB07}).

We study \emph{multi-attributed data words}
where, instead of one data value, \emph{several} data
values are associated to a given position. 
This important extension allows
for example modeling \emph{nested} 
parameterized systems where a process
has subprocesses which have subprocesses and so on.
We built on the logic on multi-attributed data words
\emph{basic data LTL (BD-LTL)}
\cite{DBLP:conf/fsttcs/KaraSZ10} allowing for navigation wrt.\ a data value.
It uses the well-known 
LTL with past-time operators 
and has additionally a class quantifier over \emph{one data value}
used to bind a current data value and restrict the evaluation of the formula
to the positions where the same data value appears.
Decidability of the satisfiability problem was shown using a 
reduction to non-emptiness of DA. 
Adding a class quantifier over tuples makes BD-LTL undecidable like 
other logics over multi-attributed data words with tuple navigation
\cite{DBLP:conf/lics/DemriFP13,DBLP:conf/mfcs/BjorklundB07}.

\paragraph{Contributions.}
\begin{wrapfigure}{r}{0.52\textwidth}
\vspace*{-3ex}
\includegraphics[width=0.5\textwidth]{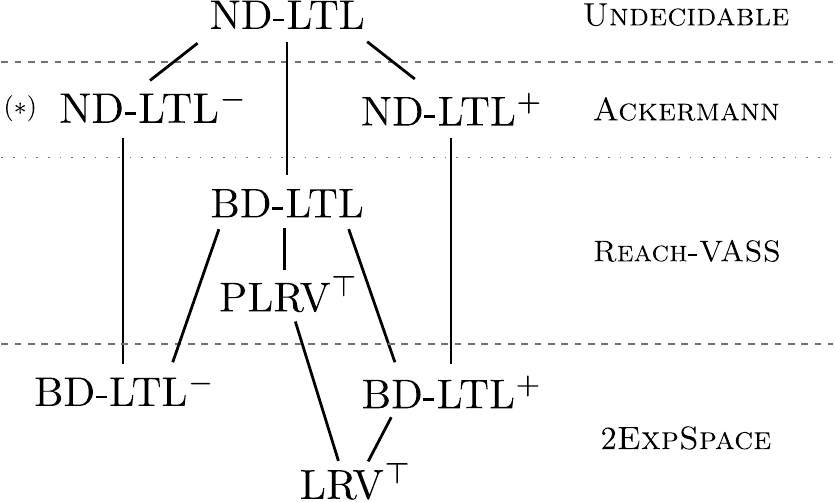}
  \caption{\small Overview of the logics studied in this paper. Lines are drawn downwards to logics with lower expressiveness. The depicted complexity classes apply over finite as well as infinite words except for $\NDLTL^-$, marked by $(*)$, which is undecidable over infinite words.
\label{fig:overview}}
\vspace*{-3ex}
\end{wrapfigure}
We consider first two fragments of \BDLTL:
the \emph{class future fragment} \fBDLTL (past operators are disallowed
for navigation wrt.\ a data value) and the 
\emph{class past fragment} \pBDLTL (restriction of future operators).
Both fragments are shown \textsc{2ExpSpace}-complete
using \cite{DBLP:conf/lics/DemriFP13}
and revisiting the translation from \BDLTL to
DA \cite{DBLP:conf/fsttcs/KaraSZ10}.
Instead of going to general \DA we translate
\fBDLTL and \pBDLTL into \pDA and \sDA, respectively, whose
emptiness problems are in \textsc{ExpSpace}.
In \pDA (resp. \sDA) the language of the class automaton is suffix- (resp.
prefix-) closed allowing
to use the \textsc{ExpSpace}-complete coverability problem of 
multi-counter systems instead of its reachability problem
for which no primitive recursive algorithm is known 
(cf.\ \cite{DBLP:conf/popl/Leroux11}).
We consider both finite and infinite word semantics of the fragments.

We then define the new logic \NDLTL allowing
for navigation wrt.\ tuples respecting a certain tree-order,
i.\,e., there are several layers of data with
nested access. 
For example, one can navigate on the first layer and, fixing a value, navigate on the second (see example below).
Independent navigation on the whole second layer is not possible.
While even with this restricted navigation \NDLTL is undecidable we obtain, as for \BDLTL, two natural fragments $\NDLTL^+$ and $\NDLTL^-$.
We can proof their decidability by a translation into 
\emph{nested data automata (NDA)} that we introduce as an appropriate extension of \DA.
$k$-NDA have $k$ class automata and 
accept data words with $k$ data values at each position. 
The $i$-th class automaton must accept all class
strings obtained by projection of the data word using the same \emph{first} 
$i$ data values.
Emptiness of $k$-NDA is undecidable, but shown decidable
for $k$-sNDA (where class automata have suffix-closed languages) 
and $k$-pNDA (prefix-closed) using 
\emph{nested multi-counter systems} (similar to
models in \cite{DBLP:conf/mfcs/BjorklundB07,DBLP:conf/ershov/LomazovaS99})
which generalize multi-counter systems to several layers 
of nested counters. Their emptiness problem is
undecidable, but, as they are well-structured transition
systems \cite{DBLP:journals/tcs/FinkelS01,DBLP:journals/bsl/Abdulla10}, 
coverability and control state reachability are decidable.
$\NDLTL^+$ and $\NDLTL^-$ are shown \textsc{Ackermann}-hard
via a reduction from the control state reachability problem 
of reset multi-counter systems \cite{DBLP:conf/mfcs/Schnoebelen10}.
Finally, $\NDLTL^+$ is decidable over infinite words
but $\NDLTL^-$ is not.
Figure \ref{fig:overview} summarizes some results.

\paragraph{Related Work.}
The logics LRV$^\top$ (based on \cite{DBLP:journals/logcom/DemriDG12})
and the more expressive LRV over
multi-attributed data words studied in \cite{DBLP:conf/lics/DemriFP13}
built also on LTL and allow to state that \emph{one} 
of the current data values must be seen again
in the future. %
LRV (LRV$^\top$) can be extended to PLRV (PLRV$^\top$)
with past obligations. 
PLRV$^\top$ is less expressive than \BDLTL\footnote{In \cite{DBLP:conf/lics/DemriFP13} it is stated without proof that PLRV is also less expressive
than \BDLTL.}
and we show that LRV$^\top$ is less
expressive than \fBDLTL. LRV (and LRV$^\top$) are 
\textsc{2ExpSpace}-complete like \fBDLTL.
We use their hardness result for our logic.
The proof of the upper bound
is also based on the coverability problem of multi-counter systems. 
However, our proof is split into smaller, structured parts.
The handling of infinite word versions of our fragments is similar to 
theirs but we have to treat the additional problems coming from
the nested data.
Navigation wrt.\ data tuples was considered and shown undecidable
but no decidable fragments were given.
A logic handling data values in a very natural way is Freeze-LTL \cite{DBLP:journals/tocl/DemriL09}.
It exhibits a similar future-restriction as \fBDLTL and $\NDLTL^+$ and finite satisfiability is decidable and \textsc{Ackermann}-hard.
However, satisfiability over infinite words is undecidable while it is still decidable for \fBDLTL and $\NDLTL^+$.
In \cite{DBLP:conf/mfcs/BjorklundB07}, words with nested data values 
were also considered. They show undecidability for the two-variable logic
with two layers of nested data
and the $+1$ and $<$ predicates over positions.
They introduce \emph{higher-order multi-counter automata}, a very similar model
to our nested multi-counter systems.
Their proof of Turing completeness could be
easily adapted to nested multi-counter systems.
However, the well-structuredness of the model is not exploited.
If the $+1$ predicate is dropped they obtain decidability,
which is orthogonal to our result as we can express the
successor relation in our fragments.
In \cite{DBLP:conf/fossacs/TzevelekosG13} history register automata (HRA) 
have been introduced, which can easily be simulated by our $\pNDA$.
A weak variant of HRA is defined which is similar to our $\pDA$, but only studied over finite words.

\paragraph{Example.}\label{sec:examples}
In object-oriented programming languages, iterating over a list is usually done using a method \emph{next} on a corresponding \emph{iterator object}.
Once the state of a list changes, e.\,g., by adding an element, any iterator for that list created before should no longer be used.
We model this scenario using propositions $\mathit{newItr}$, $\mathit{add}$, $\mathit{next}$ and data words with ordered attributes  $l < s < i$ for identifying the list, the list's state and the iterator, respectively.
Thus, fixing a state, fixes also the list it belongs to and fixing an iterator object fixes the corresponding list in its current state.

Consider two constraints: (1) When observing $\mathit{add}$, the state of the list changes, i.\,e., we observe a \emph{fresh} state ID. (2) When calling $\mathit{next}$, the state ID must not have changed since the creation of the currently used iterator.
By $\G (\mathit{add} \rightarrow \C_s \neg \Y^= \top)$ we can express (1).
We bind the current state and list IDs using a class quantifier $\C_s$ and check that there is no previous position with the same IDs.
We express (2) by $\G (\mathit{next} \rightarrow \C_i (\top \S^= \mathit{newItr}))$.
$\C_i$ binds the current state, list and iterator ID. 
The formula $(\top \S^= \mathit{newItr})$ guarantees that there is a previous position with the same IDs where the iterator is created.
For both constraints we use $\NDLTL^-$ with local past operators ($\Y^=$ and $\S^=$).

\section{Preliminaries}

Let $\mathbb{N}=\lbrace0,1,2,…\rbrace$ be the set of natural numbers and $[k]:=\lbrace1,…,k\rbrace$ for $k\in\mathbb{N},k>0$.
We denote the set of finite words over an alphabet $\Sigma$ by $\Sigma^*$, the set of infinite words by $\Sigma^\omega$ and their union by $\Sigma^\infty = \Sigma^* \cup \Sigma^\omega$.
The empty word is denoted $\epsilon$.
The \emph{shuffle} of two words $w,w' \in \Sigma^\infty$ is inductively defined by $\epsilon \shuffle w = w \shuffle \epsilon = \{w\}$ and $a w \shuffle a' w' = a(w \shuffle a'w') \cup a'(aw \shuffle w')$ where $a,a'\in \Sigma$.
The shuffle of two languages $L,L'\subseteq\Sigma^\infty$ is $L \shuffle L' = \{w \shuffle w' \mid w \in L, w' \in L'\}$ and $\shuffle(L) = \bigcup \{M \mid M \subseteq L \shuffle M\}$ denotes the infinite shuffle of a language with itself.
For two sets $M, N$ we denote by $M^N$ the set of all mappings $f:N \to M$ from $N$ to $M$.
Given a partial order $(M,\sqsubseteq)$ we write $m^\downarrow = \lbrace m'\in M\mid m'\sqsubseteq m\rbrace$ for the downward closure of $m\in M$.
We define a \emph{tree order} $(M,\le)$ to be a partial order s.\,t.\ for all $m\in M$ its downward closure is a linear order $(m^\downarrow, \le)$.
Hence, we allow a tree order to contain several minimal elements (roots).

An $\infty$-automaton over a finite \emph{input alphabet} $\Sigma$ is a tuple $\mathcal{A} = (Q, \Sigma, \delta, I, F, B)$ 
where $Q$ is a finite set of \emph{states}, $I,F,B \subseteq Q$ are sets of \emph{initial}, \emph{final} and \emph{Büchi-accepting} states, respectively, and $\delta \subseteq Q \times \Sigma \times Q$ is the \emph{transition relation}. 
A \emph{run} of $\mathcal{A}$ on a word $w_0w_1…\in\Sigma^\infty$, $w_i\in\Sigma$ is a maximal sequence of transitions $t_0t_1…\in\delta^\infty$ with $t_i=(q_i,w_i,q_{i+1})$ and $q_0\in I$.
It is \emph{accepting} if it ends in a final state $q_f\in F$ or visits a Büchi-accepting state $q_b\in B$ infinitely often.
$\mathcal{A}$ \emph{accepts} $w$ if there is an accepting run of $\mathcal{A}$ on $w$ and the set of all accepted words is denoted $\mathcal{L}(\mathcal{A})$.

A \emph{letter-to-letter transducer} is an $\infty$-automaton $\mathcal{T}=(Q,\Sigma,\Gamma,\delta,I,F,B)$ where $\Gamma$ is an additional \emph{output alphabet} and $\delta\subseteq Q\times\Sigma\times\Gamma\times Q$ is a \emph{transition relation with output}.
A word $\gamma\in\Gamma^\infty$ is an \emph{output} of $\mathcal{T}$ if there is an accepting run of $\mathcal{T}$ labeled by $\gamma$.
For $w\in\Sigma^\infty$ we denote $\mathcal{T}(w)\subseteq\Gamma^\infty$ the set of possible outputs of $\mathcal{T}$ when reading $w$.

\subsubsection{Data Words and Data Languages.}
Let $\Sigma$ be a finite alphabet, $\Delta$ an infinite set of \emph{data values} and $A$ a finite set of \emph{attributes}. 
A \emph{multi-attributed data word} is a finite or infinite sequence $w=w_0w_1…\in(\Sigma\times\Delta^A)^\infty$ of pairs $w_i = (a_i, \vec{d}_i)$ of letters and \emph{data valuations} $\vec{d}_i:A \to \Delta$.
Given a valuation $\vec{d} \in \Delta^A$ and a set of attributes $X \subseteq A$ we denote by $\vec{d}|_X$ the restriction of $\vec{d}$ to $X$.
We call $\str{w} := a_0a_1\ldots \in \Sigma^\infty$ the \emph{string projection} of $w$. 
The \emph{$X$-class string} of $w$ for a data valuation $\vec{d} \in \Delta^X$ is the maximal projected subsequence $\class{w}{\vec{d}} := a_{i_0}a_{i_1}… \in \Sigma^\infty$ of $w$ with $0 \le i_j \le |w|$, $i_j<i_{j+1}$ and $\vec{d}_{i_j}|_X=\vec{d}$.
We use natural numbers $1,2,3,…$ as representatives for arbitrary data values.
For a data word $w = (a_0,\vec{d}_0)(a_1,\vec{d}_1)…$ we also write $\binom{a_0a_1…}{\vec{d}_0\vec{d}_1…}$.
For $|A|=1$ we call data words $w\in(\Sigma\times\Delta^A)^\infty$ \emph{single-attributed}.
We may then omit the functional notation and use $\Delta$ instead of $\Delta^A$ if $A$ is not essential, e.\,g., writing $w\in(\Sigma\times\Delta)^\infty$.

\paragraph{Register Automata (RA).}
A \emph{register automaton} \cite{DBLP:journals/tcs/KaminskiF94} over $\Sigma$ and $\Delta$ is a tuple $\mathcal{R} = (Q, \Sigma, k, \delta, I, F, B)$ where $Q$ is a finite set of states, $I,F,B \subseteq Q$ are sets of initial, final and Büchi-accepting states, respectively, $k \geq 1$ 
is the number of \emph{registers} and $\delta \subseteq Q \times 2^{[k]} \times 2^{[k]} \times \Sigma \times [k] \times Q$ is the transition relation.
A \emph{configuration} of $\mathcal{R}$ is a pair $(q, v)$ where $q \in Q$ and $v: [k] \to \Delta \cup \{\bot\}$ is a valuation of the registers.
A \emph{run} of $\mathcal{R}$ on a single-attributed data word $w=(a_0,d_0)(a_1,d_1)…\in(\Sigma\times\Delta)^\infty$ is a maximal sequence of configurations $\rho = (q_0,v_0)(q_1,v_1)…$ s.\,t.\ $q_0\in I$ and for all $0\le i<|w|$ there is a transition $(q_i,R^=_i,R^{\neq}_i,a_i,x_i,q_{i+1}) \in \delta$ such that $\forall_{r\in R^=_i} v_i(r) = d_i$, $\forall_{r\in R^{\neq}_i} v_i(r) \neq d_i$, $v_{i+1}(x_i) = d_i$ and $\forall_{r \neq x_i} v_{i+1}(r) = v_i(r)$.
A run $\rho$ of $\mathcal{R}$ is \emph{accepting} if it ends in a final state $q\in F$ or it visits a Büchi-accepting state $q\in B$ infinitely often.
An \RA accepts a single-attributed data word $w$ if it has an accepting run on $w$.

\subsubsection{Multi-counter Systems.}
A \emph{reset multi-counter system (rMCS)} is a tuple $\mathcal{M} = (Q, C, \delta, Q_0)$ where $Q$ and $C$ are finite sets of \emph{(control) states} and \emph{counters}, respectively, $Q_0\subseteq Q$ is the set of \emph{initial states} and, for $OP:=\lbrace\mathrm{inc}, \mathrm{dec}, \mathrm{res}\rbrace$, $\delta\subseteq Q \times OP \times C \times Q$ is the \emph{transition relation}.
A \emph{run} of $\mathcal{M}$ is a sequence $\rho \in Q_0 \times (OP \times C \times Q)^\infty$, s.\,t.\ every subsequence $(q,op,c,q')$ of $\rho$, with $q,q'\in Q$, $op\in OP$, $c\in C$, is an element of $\delta$ and counters never become negative, i.\,e, there is an injection $f_\rho: \mathbb{N} \to \mathbb{N}$ that maps every position $i$ in $\rho$ with $(\rho_i,\rho_{i+1}) = (\mathrm{dec},c)$, for $c\in C$, to a position $j<i$ with $(\rho_j,\rho_{j+1}) = (\mathrm{inc},c)$ and $(\rho_k,\rho_{k+1})\neq(\mathrm{res},c)$, for all $k$ with $j<k<i$.
An \emph{MCS} is an rMCS where the transition relation does not use the reset operation $\mathrm{res}$.

\section{Local Navigation in \BDLTL}
\label{sec:bdltl-fragments}

The temporal logic \BDLTL is based on \LTL. Linear-time properties are formulated using temporal operators to navigate along the positions of a word.
This concept is extended analogously to data words by allowing for navigation along the occurrences of a data value.
While the \LTL operators express properties on the global structure of the word, independent of associated data values, navigation along the class strings of a word allows for expressing a local view, e.\,g., modeling the behaviour of a single process.

We now recall syntax and semantics of \BDLTL \cite{DBLP:conf/fsttcs/KaraSZ10} and define two natural fragments $\BDLTL^+$ and $\BDLTL^-$ where local navigation is restricted to future and past operators, respectively.
The satisfiability problem of \BDLTL is decidable. 
Yet, it is known to be as hard as reachability in Petri nets \cite{DBLP:conf/fsttcs/KaraSZ10} and we show that satisfiability in our fragments is still \textsc{2ExpSpace}-hard.
The next section then sharpens this result by developing a \textsc{2ExpSpace} decision procedure based on restricted variants of data automata.

Let $AP$ be a finite set of atomic propositions and $A$ a finite set of attributes.
The syntax of \BDLTL formulae consists of \emph{position formulae} $\varphi$ and \emph{class formulae} $\psi$.
It is defined by the following grammar where $p\in AP$, $x,y \in A$ and $r\in\mathbb{Z}$.
\begin{align*}
  \varphi &::= p  \mid \varphi \land \varphi \mid\neg\varphi\mid \X \varphi   \mid\Y \varphi   \mid \varphi \U \varphi   \mid\varphi\S \varphi   \mid\C_x^r\psi\\
  \psi &::= @x \mid \psi \land \psi \mid\neg\psi\mid \X^= \psi \mid\Y^= \psi \mid \psi \U^= \psi \mid\psi\S^= \psi \mid\varphi 
\end{align*}

The semantics of \BDLTL position formulae $\varphi$ is defined over models $(w,i)$ consisting of an $A$-attributed data word $w=(a_0,\vec{d}_0)(a_1,\vec{d}_1)…\in(\Sigma\times\Delta^A)^\infty$ over alphabet $\Sigma=2^{AP}$ and and a position $0\le i<|w|$.
Class formulae $\psi$ are defined over models $(w,i,d)$ containing an additional data value $d\in\Delta$.
Boolean and \LTL operators are defined as usual, ignoring the data values.
For the semantics of the quantifier $\C^r_x$ and class formulae $\psi$, let $\mathrm{pos}_d(w) := \lbrace i\mid0\le i<|w|, \exists_{x\in A}:\vec{d}_i(x) = d\rbrace$ denote the set of positions $i$ in $w$ where some attribute has the value $d\in\Delta$.
Then,
\[
\begin{array}{lll}
   (w,i) &\models \C^r_x \psi &\text{if } 0\le i+r<|w| \text{ and }(w,i+r,\vec{d}_i(x))\models\psi,\\
  (w,i,d)&\models \varphi        &\text{if } (w,i) \models \varphi,\\
  (w,i,d)&\models @x       &\text{if } \vec{d}_i(x) = d,\\
  (w,i,d)&\models \X^=\psi    &\text{if there is } j\in\mathrm{pos}_d(w), j>i\\ 
                    &&\text{ and, for the smallest such } j,\ (w,j,d) \models \psi,\\
  (w,i,d)&\models\psi_1\U^=\psi_2&\text{if there is } j\in\mathrm{pos}_d(w), j\ge i 
                      \text{ s.\,t. } (w,j,d)\models\psi_2 \\
                    &&\text{and } \forall_{j'\in\mathrm{pos}_d(w), j>j'\ge i}: (w,j',d)\models\psi_1.
\end{array}
\]

The operators $\Y^=$ and $\S^=$ are furthermore defined as expected and $(w,0)\models\varphi$ is abbreviated $w\models\varphi$.
We also use the abbreviations $\top$ and $\F^=\varphi := \top\U^=\varphi$.

\begin{definition}[$\BDLTL^\pm$]
We define the following syntactical fragments:
\BDLTL without operators $\X^=$ and $\U^=$ is called $\BDLTL^-$.
\BDLTL without operators $\Y^=$ and $\S^=$ is called $\BDLTL^+$.
\end{definition}

In \cite{DBLP:conf/lics/DemriFP13}, the \emph{Logic of Repeating Values (\LRV)} was introduced as an extension of \LTL interpreted over multi-attributed data words. 
The additional operators are of the form $x\approx\X^ry$, $x \approx \langle\varphi?\rangle y$ and $x \not\approx \langle\varphi?\rangle y$.
The former expresses that the current value of attribute $x$ must be equal to the value of attribute $y$ at the position $r$ steps ahead.
Similarly, the latter two express that the value of $x$ must eventually or never, respectively, be observed as the value of $y$ at a position where, in addition, a formula $\varphi$ holds.
In LRV$^\top$ only $x\approx\X^ry$ and $x \approx \langle\top?\rangle y$ are allowed. 
$x\approx\X^ry$ and
$x \approx \langle\varphi?\rangle y$
can easily be translated into $\BDLTL^+$: 
$x \approx \X^ry$ is equivalent to $\C_x^r@y$ and 
$x \approx \langle\varphi?\rangle y$ is equivalent to $\C_x^0\X^=\F^=(@y \land \varphi)$ \cite{DBLP:conf/fsttcs/KaraSZ10}.
On the contrary, \LRV cannot express the operator $\X^=$.

\begin{proposition}
  \fBDLTL is strictly more expressive than LRV$^\top$.
\end{proposition}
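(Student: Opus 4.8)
The plan is to establish the two inclusions separately. For the inclusion $\text{LRV}^\top\subseteq\fBDLTL$ I would simply reuse the translations recalled above: $x\approx\X^r y$ is captured by $\C_x^r@y$ and $x\approx\langle\top?\rangle y$ by $\C_x^0\X^=\F^=(@y)$. Since neither translation uses the past class operators $\Y^=$ or $\S^=$, every $\text{LRV}^\top$ formula in fact lands in the future fragment $\fBDLTL$, so $\mathcal{L}(\phi)$ is $\fBDLTL$-definable for each $\text{LRV}^\top$ formula $\phi$. It then remains to exhibit a single $\fBDLTL$-definable language that no $\text{LRV}^\top$ formula defines.

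As witness I would take the single-attribute formula $\Phi:=\C_x^0\X^= p$, which navigates from the first position to the next occurrence of its data value and asserts that $p$ holds there. The strategy for the lower bound is to isolate the only information about data that $\text{LRV}^\top$ can read off at a position $i$: (i) whether the current value equals the value a fixed number $r$ of steps away, for the finitely many offsets $r$ occurring in the formula, and (ii) whether the current value recurs strictly later (the content of $x\approx\langle\top?\rangle x$). I claim that two data words with the same string projection that agree, at every position, on all of these predicates are indistinguishable by $\text{LRV}^\top$: both models sit on the same linear order carrying the same labels, and the listed predicates are exactly the atomic data assertions of $\text{LRV}^\top$, so a routine induction on formula structure shows they satisfy the same $\text{LRV}^\top$ formulas.

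The core of the argument is to build, for an arbitrary $\text{LRV}^\top$ formula $\phi$ with maximal offset $r^*$, a pair $w,w'$ of single-attributed words over $\{p,\neg p\}$ that agree on all these predicates yet are separated by $\Phi$. I would use only values occurring at most twice: all ``filler'' positions carry pairwise distinct values identical in $w$ and $w'$, while two designated pairs of positions carry the two doubled values. Fixing the set $S$ of first occurrences and the set $T$ of second occurrences once and for all forces predicate (ii) to agree everywhere; I then pair $0\in S$ with a $p$-labeled position $t_p\in T$ in $w$ and with a $\neg p$-labeled position $t_{\neg p}\in T$ in $w'$, compensating with a second first-occurrence $s'$ whose partner is swapped accordingly. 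Placing all four special positions pairwise farther than $r^*$ apart guarantees that no offset-$r$ equality with $r\le r^*$ exists in either word, so predicate (i) also agrees everywhere. By construction the next occurrence of the value at position $0$ is $p$-labeled in $w$ and $\neg p$-labeled in $w'$, whence $w\models\Phi$ but $w'\not\models\Phi$, while $\phi$ cannot tell them apart.

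The main obstacle, and the point needing the most care, is precisely this construction: one must reassign which position carries the value of position $0$ so that the label met by $\X^=$ flips, without disturbing the recurrence pattern (achieved by keeping $S$ and $T$ fixed and only re-pairing within them) and without creating any short-range coincidence detectable by $\approx\X^r$ (achieved by gaps depending on the formula's $r^*$). Since the witness pair is tailored to each $\phi$ while $\Phi$ is fixed, no single $\text{LRV}^\top$ formula is equivalent to $\Phi$, yielding strict inclusion; working entirely with single-attributed words is legitimate, as it establishes the separation already on that subclass.
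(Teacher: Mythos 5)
Your proposal is correct, and for the inclusion direction it coincides with the paper: the paper proves $\text{LRV}^\top\subseteq\fBDLTL$ exactly by the two translations you cite ($x\approx\X^ry \mapsto \C_x^r@y$ and $x\approx\langle\top?\rangle y \mapsto \C_x^0\X^=\F^=(@y\land\top)$), observing that no past class operators are introduced. Where you genuinely diverge is on strictness: the paper disposes of it in a single sentence — ``\LRV cannot express the operator $\X^=$'' — offered as an assertion (with no supporting argument in the body or appendix), whereas you construct an actual separation proof. Your argument is the right one and, as far as I can check, sound: the atoms of $\text{LRV}^\top$ on single-attributed words reduce to the offset equalities $x\approx\X^rx$ and the recurrence atom $x\approx\langle\top?\rangle x$, a routine induction shows that two equally labeled words agreeing position-wise on these atoms are $\text{LRV}^\top$-indistinguishable (this is exactly the point: $\langle\top?\rangle$ sees \emph{that} a value recurs but not the label at the \emph{next} recurrence, which $\C_x^0\X^=p$ inspects), and your swap of the two pairings within fixed first/second-occurrence sets $S,T$, with all special positions more than $r^*$ apart, makes both atom families agree while flipping the truth value of $\Phi$. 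Two small points you should make explicit to be fully rigorous: place both positions of $S$ before both positions of $T$ (e.g.\ $0<s'<t_p<t_{\neg p}$) so that the first/second-occurrence pattern — and hence the recurrence atom — is preserved under both pairings; and, since the proposition concerns finite and infinite words alike, note that the construction adapts to the infinite case by padding with pairwise fresh filler values. In sum, your proof buys a self-contained justification of a step the paper leaves at the level of a claim, at the cost of the case analysis above; the paper's route is shorter but delegates the separation entirely to the reader.
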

The satisfiability problem of $\LRV^\top$ (and \LRV) was shown to be \textsc{2ExpSpace}-hard in~\cite{DBLP:conf/lics/DemriFP13} by encoding runs of so called chain automata using exponentially many counters.
The proof \cite[Lemma 15]{DBLP:conf/lics/DemriFP13} can easily be adapted to show that the variant of $\LRV^\top$ where past \emph{instead} of future operators are used ($x\approx\Y^ry$, $x \approx \langle\varphi?\rangle^{-1}y$) is also \textsc{2ExpSpace}-hard and as $\BDLTL^-$ subsumes this variant we obtain a lower bound for both of our fragments.

\begin{theorem}[Hardness]\label{thm:bdltl-hardness}
  The satisfiability problems of \fBDLTL and \pBDLTL are \textsc{2ExpSpace}-hard over both, finite and infinite data words.
\end{theorem}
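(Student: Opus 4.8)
The plan is to reduce satisfiability of LRV$^\top$, and of a past-mirrored variant of it, to satisfiability of the two fragments; both source problems are 2ExpSpace-hard (the future one by \cite{DBLP:conf/lics/DemriFP13}, the past one by the adaptation below). The reductions are the linear-size, equivalence-preserving translations between the logics, so beyond the size bound the only thing to verify is that each translation respects the syntactic restriction defining the target fragment.

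For \fBDLTL, observe that an LRV$^\top$ formula is built only from $x \approx \X^r y$ and $x \approx \langle \top? \rangle y$, which by the equivalences recalled above rewrite to $\C_x^r @y$ and $\C_x^0 \X^= \F^= @y$, respectively. Neither rewriting uses a past local operator $\Y^=$ or $\S^=$, so the image lies in \fBDLTL, the fragment forbidding exactly these operators, and it is linear in the input. As the two logics agree on every multi-attributed data word, a formula and its translation are equisatisfiable; hence a \fBDLTL decision procedure would decide LRV$^\top$ satisfiability and the 2ExpSpace lower bound transfers.

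For \pBDLTL, I would mirror the construction in time. Consider the variant of LRV$^\top$ with operators $x \approx \Y^r y$ and $x \approx \langle \top? \rangle^{-1} y$, translated by $x \approx \Y^r y \mapsto \C_x^{-r} @y$ and $x \approx \langle \top? \rangle^{-1} y \mapsto \C_x^0 \Y^= (\top \S^= @y)$. These use only past local navigation and hence fall into \pBDLTL, the fragment forbidding $\X^=$ and $\U^=$. It remains to show that this variant is itself 2ExpSpace-hard, which I would obtain by time-reversing the chain-automata encoding of \cite[Lemma 15]{DBLP:conf/lics/DemriFP13}: each obligation ``the current value recurs at some later position'' becomes ``\dots\ at some earlier position'', and the simulated run over exponentially many counters is simply read from right to left. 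Equisatisfiability of these formulae with their \pBDLTL translations then yields the lower bound for the past fragment.

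The main obstacle is the faithful time-reversal of \cite[Lemma 15]{DBLP:conf/lics/DemriFP13}: one must check that the invariants tying each counter update to its witnessing repeated value are genuinely directional, so that reflecting the word leaves a correct encoding and no hidden mixture of forward and backward reasoning is required. The finite/infinite distinction then causes no extra difficulty, since each translation is a semantic equivalence evaluated at the initial position and is therefore independent of word length; invoking the cited hardness in the respective semantics, and its mirrored counterpart for \pBDLTL, yields the bound over both finite and infinite data words.
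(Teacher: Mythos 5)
Your proposal matches the paper's own argument essentially step for step: the future case follows because \fBDLTL subsumes LRV$^\top$ via the translations $x \approx \X^r y \equiv \C_x^r @y$ and $x \approx \langle\top?\rangle y \equiv \C_x^0\X^=\F^=@y$, and the past case is obtained exactly as you describe, by adapting \cite[Lemma 15]{DBLP:conf/lics/DemriFP13} to the past-mirrored variant of LRV$^\top$ (with $x\approx\Y^ry$ and $x \approx \langle\varphi?\rangle^{-1}y$), which $\BDLTL^-$ subsumes. Your attention to the directionality of the chain-automata encoding and to the finite/infinite distinction is the right diligence, but it introduces no deviation from the paper's route.
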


\section{Satisfiability of $\BDLTL^\pm$ is \textnormal{\textsc{2ExpSpace}}-complete}
\label{sec:bdltl-sat}

This section is dedicated to an exact characterization of $\BDLTL^\pm$ satisfiability in terms of complexity.
It also provides a basis for Section~\ref{sec:ndltl-sat} that follows a similar structure but is technically more involved.

First, we formally define data automata and give restrictions that reflect the restrictions on our logic.
They allow us to decide emptiness in \textsc{ExpSpace}, as opposed to full data automata for which emptiness is as hard as reachability in Petri nets~\cite{DBLP:journals/tocl/BojanczykDMSS11}.
Second, we briefly recall the (exponential) translation from \BDLTL to data automata~\cite{DBLP:journals/corr/abs-1010-1139} and show that our logical restrictions indeed carry over to the restrictions on the automata side.

\subsection{\textnormal{\textsc{ExpSpace}}-variants of Data Automata}

A \emph{data automaton (\DA)} is a tuple $\mathcal{D} = (\mathcal{A},\mathcal{B})$ where
the \emph{base automaton} $\mathcal{A} = (Q,\Sigma,\Gamma,\delta_\mathcal{A},Q_0,F_\mathcal{A},B_\mathcal{A})$ is a letter-to-letter transducer and the \emph{class automaton} $\mathcal{B} = (S,\Gamma,\delta_\mathcal{B}, I, F, B)$ is an $\infty$-automaton.
A \emph{memory function} of $\mathcal{D}$ is a mapping $f:\Delta \to S \cup \lbrace\bot\rbrace$ and we denote $\mathfrak{F}$ the set of all memory functions.
A \emph{configuration} of $\mathcal{D}$ is a tuple $(q,f)\in Q\times\mathfrak{F}$ consisting of a base automaton state and a memory function.
A \emph{run} of $\mathcal{D}$ on a single-attributed data word $w=(a_0,d_0)(a_1,d_1)…\in(\Sigma\times\Delta)^\infty$ is a maximal sequence $\rho=(q_0,f_0)(q_1,f_1)…\in(Q\times\mathfrak{F})^\infty$ such that $q_0\in Q_0$, $\forall_{d\in\Delta}:f_0(d)=\bot$ and for all consecutive positions $i,i+1$ on $\rho$ there is a transition $(q_i,a_i,g,q_{i+1})\in\delta_\mathcal{A}$ of the base automaton and a transition $(s,g,s')\in\delta_\mathcal{B}$ of the class automaton such that 
\begin{inparaenum}[(1)]
  \item $f_{i+1}(d_i) = s'$ and 
  \item either $f_i(d_i)= s$, or $f_i(d_i) = \bot$ and $s\in I$, and 
  \item $\forall_{d'\in\Delta,d'\neq d_i}: f_i(d') = f_{i+1}(d')$.
\end{inparaenum}

The run $\rho$ is \emph{accepting} if 
\begin{inparaenum}[(I)]
  \item it ends in a configuration $(q,f)$ with $q\in F_\mathcal{A}$ is final and $f(\Delta)\cap S\subseteq F$, or 
  \item there are infinitely many configurations $(q,f_i)$ on $\rho$ such that $q\in B_\mathcal{A}$ is Büchi-accepting and for each data value $d$ occurring \emph{last} at some  position $i$ on $w$ the state $f_{i+1}(d)\in F$ is final and for each data value $d'$ occurring \emph{infinitely} often on $w$ there are infinitely many positions $j$ with $d_j=d'$ and $f_{j+1}(d')\in B$ is Büchi-accepting.
\end{inparaenum}
The word $w$ is accepted if there is an accepting run  of $\mathcal{D}$.

Intuitively, the base transducer $\mathcal{A}$ reads a letter $a_i\in\Sigma$, performs a transition and outputs its label $g\in\Gamma$.
The memory function maintains an \emph{instance} of the class automaton $\mathcal{B}$ for every data value that occurred so far and spawns a new instance for a fresh data value.
The (present or newly spawned) instance of $\mathcal{B}$ that corresponds to the current data value $d_i$, reads $g$ and performs a step.
For $\mathcal{D}$ to accept, $\mathcal{A}$ and every spawned instance of $\mathcal{B}$ needs to accept by either terminating in a final state or visiting some Büchi-accepting state infinitely often.

\begin{definition}[Prefix- and suffix-closed DA]
  A data automaton $\mathcal{D}=(\mathcal{A},\mathcal{B})$ is \emph{locally prefix-closed (\pDA)} if all states of the class automaton $\mathcal{B}$ are final and Büchi-accepting.
 It is \emph{locally suffix-closed (\sDA)} if all states of $\mathcal{B}$ are initial.
\end{definition}


The construction to decide emptiness of \DA given in \cite{DBLP:journals/tocl/BojanczykDMSS11} translates a \DA into a multi-counter automaton (MCA) that maintains for every class automaton state the number of instances residing in it.
That way, emptiness of \DA reduces (for finite words) to reachability in MCA.
Note that technical differences in the various notions of counter systems (e.\,g., MCA, MCS, VASS, Petri nets) are inessential here.

For \pDA, where all class automaton states are final and Büchi-accepting, automaton instances can be dismissed in any state.
The corresponding MCS thus allows for a random decrement of counters.
Clearly, in such a \emph{lossy} system the problem of reachability reduces to \emph{coverability}.
Regarding infinite words, \emph{repeated coverability} is sufficient since every class automaton state is also Büchi-accepting. 
Both problems are in \textsc{ExpSpace}~\cite{DBLP:journals/tcs/Rackoff78,DBLP:conf/apn/Habermehl97}.

For an \sDA we can decide if it accepts a \emph{finite} word by reversing the automata and
checking the resulting \pDA for emptiness.
In the rest of this section we address the remaining case of \sDA emptiness wrt.\ \emph{infinite} words obtaining the following result.

\begin{theorem}\label{thm:expspace-da}
  Emptiness of \pDA and \sDA over finite and infinite data words is decidable in \textsc{ExpSpace}.
\end{theorem}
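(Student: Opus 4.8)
The plan is to treat the four combinations of $\{\pDA, \sDA\} \times \{\text{finite}, \text{infinite}\}$ separately, three of which follow directly from the preceding discussion. For a $\pDA$ over finite words the counter abstraction of \cite{DBLP:journals/tocl/BojanczykDMSS11} yields a lossy multi-counter system, so emptiness reduces to coverability; over infinite words it reduces to repeated coverability; both are in \textsc{ExpSpace} by \cite{DBLP:journals/tcs/Rackoff78,DBLP:conf/apn/Habermehl97}. For an $\sDA$ over finite words I would reverse both the base transducer and the class automaton $\mathcal{B}$: reversing an $\sDA$, in which all class states are initial, makes all class states final, i.e.\ produces a $\pDA$ as far as finite-word acceptance is concerned, and reversal preserves non-emptiness, so this case collapses to the previous one. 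The only genuinely new case is $\sDA$ over \emph{infinite} words, which I treat directly.

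For that case I would start from the same counter abstraction: a multi-counter system $\mathcal{M}$ whose control state carries the state of the base transducer $\mathcal{A}$ and whose counters record, for each state $s$ of $\mathcal{B}$, the number of data values whose current instance resides in $s$. Reading a position guesses the current instance's state, moves it along a $\mathcal{B}$-transition, and, for a fresh value, simply increments a target counter, which is unconstrained precisely because every state of $\mathcal{B}$ is initial. What distinguishes the $\sDA$ acceptance condition from the $\pDA$ one is that instances are no longer accepting by default: a data value occurring finitely often must end in a final state of $F$, and a data value occurring infinitely often must visit a Büchi class state of $B$ infinitely often.

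To capture this I would decompose an accepting run into a finite prefix followed by an ultimately periodic part, and enrich $\mathcal{M}$ with a \emph{retire} transition that removes a token from any final-state counter. A finitely-occurring data value is then modelled by spawning its instance, running it to a final state, and retiring it at its last occurrence; since retirement only decrements final-state counters, $\mathcal{M}$ remains a multi-counter system and this restricted loss stays within the \textsc{ExpSpace} range. The infinitely-occurring data values correspond to the tokens that are never retired, and the requirement is that along the periodic part $\mathcal{A}$ visits $B_\mathcal{A}$ infinitely often while every such persistent token passes through a Büchi class state infinitely often. I would phrase the existence of such a periodic part as a repeated coverability question over $\mathcal{M}$, additionally guessing the finite set of class states through which the persistent tokens recur, so that emptiness of $\sDA$ over infinite words reduces to a control-state repeated-coverability problem and stays in \textsc{ExpSpace} by \cite{DBLP:journals/tcs/Rackoff78,DBLP:conf/apn/Habermehl97}, following the infinite-word treatment of \cite{DBLP:conf/lics/DemriFP13}.

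The main obstacle is the Büchi requirement on the persistent tokens. Because the abstraction keeps only the number of instances per class state and not their identities, I cannot directly force each individual infinitely-occurring token to revisit $B$; a naive cycle of $\mathcal{M}$ might route some tokens through $B$ while leaving others idling in a non-Büchi state forever. The crux is therefore to isolate the right certificate---a cycle of $\mathcal{M}$ that covers its starting configuration, visits $B_\mathcal{A}$, and routes \emph{every} token present through a Büchi class state---and to prove that such a productive covering cycle exists if and only if there is a genuine accepting infinite run in which all infinitely-occurring instances satisfy their Büchi condition. Establishing the soundness direction, i.e.\ turning a productive covering cycle together with the guessed recurrent state set back into an honest data word with the correct finite/infinite occurrence pattern, is where the argument is most delicate, and is exactly the point that the nested construction of Section~\ref{sec:ndltl-sat} will later have to generalise.
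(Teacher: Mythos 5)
Your handling of three of the four cases (\pDA over finite and infinite words via coverability and repeated coverability, \sDA over finite words via reversal) is exactly the paper's argument. But for the genuinely new case, \sDA over infinite words, you have correctly located the crux---the counter abstraction forgets token identities, so the per-value B\"uchi condition cannot be imposed naively---and then left precisely that crux unresolved, and the certificate you gesture at does not survive scrutiny. A ``covering cycle that routes \emph{every} token present through a B\"uchi class state'' is not a coverability-style property: to check it you would have to split each class state of $\mathcal{B}$ into ``has visited $B$ since the cycle started'' and ``has not'' copies and demand that the \emph{unvisited} counters are \emph{zero} when the cycle closes. Coverability can only bound counters from below, so enforcing this needs zero tests or a global transfer/reset at the checkpoint, and with either of those the \textsc{ExpSpace} bound via \cite{DBLP:journals/tcs/Rackoff78,DBLP:conf/apn/Habermehl97} is gone (transfer and reset nets are \textsc{Ackermann}-hard, as the paper itself exploits in Theorem~\ref{thm:ack-ndltl}). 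Soundness is also broken by the surplus: when you iterate a covering cycle $c_1 \preceq c_2$, the tokens in the difference $c_2 - c_1$ idle forever in whatever states they occupy; these correspond to data values with finitely many occurrences, which must terminate in $F$, and your retire transitions only help if those states happen to be final. So the reduction ``emptiness of \sDA over infinite words $=$ repeated coverability plus a guessed recurrent state set'' is not established, and as stated is likely false.

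The paper's proof (Lemmas~\ref{lem:sda-properties} and~\ref{lem:sda-construction}) avoids tracking individual tokens altogether. It guesses the set $T\subseteq\delta$ of class-automaton transitions taken infinitely often and a base state $q$, and checks three properties: (A1) from every transition in $T$ an accepting state of $\mathcal{B}$ is reachable using only $T$ (plain graph reachability); (A2) there is one infinite sequence over $T$ in which \emph{each} $t\in T$ recurs infinitely often and whose labels form an output of $\mathcal{A}$ from $q$ (emptiness of an at most exponential B\"uchi automaton---no counters involved); (A3) a configuration is reachable in which every active instance is either final or $T$-enabled, which is a \emph{finite-word} \sDA emptiness question and hence again coverability. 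The step your proposal is missing is the sufficiency argument that makes this weak, per-transition certificate enough: because all states of $\mathcal{B}$ are initial (suffix-closedness), fresh instances can be spawned in \emph{any} state to absorb transitions of the (A2)-sequence not currently needed, so a queue-based fair scheduler can serve each active instance a finite accepting continuation (guaranteed by (A1)) as a scattered subsequence of that single sequence. This is how the paper discharges exactly the B\"uchi obligation you flagged, while every algorithmic subproblem stays within coverability and hence \textsc{ExpSpace}; your per-token cycle condition is strictly stronger than needed and computationally more expensive than the theorem allows.
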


Let $\mathcal{D} = (\mathcal{A},\mathcal{B})$ be an \sDA with $\mathcal{A}=(Q,\Sigma,\Gamma,\delta_\mathcal{A},Q_0,\emptyset,B_\mathcal{A})$ (we omit final states) and $\mathcal{B}=(S,\Sigma,\delta,I,F,B)$.
Towards deciding emptiness of $\mathcal{D}$, we consider an accepting run $\rho$ of $\mathcal{D}$ and separate the finite from the infinite behaviour in terms of transitions $t\in\delta$ of the class automaton:
There is a position $i$ on $\rho$, such that $t$ is taken after $i$ iff $t$ is taken infinitely often on $\rho$.

The idea is now to guess (characteristics of) the configuration at this position and check that there is a finite run reaching the configuration and that starting from it there is an infinite accepting run.
For the former, we construct an \sDA that accepts a \emph{finite} word iff the configuration is reachable.
For the latter, we now have guaranteed infinite recurrence of all relevant transitions and can thereby reduce the problem to emptiness of an at most exponentially larger Büchi automaton.

For a set $T\subseteq\delta$ of transitions of the class automaton $\mathcal{B}$ and a state $q\in Q$ of the base automaton $\mathcal{A}$, consider the following three properties.
\begin{enumerate}[({A}1)]
  \item After taking any transition in $T$, $\mathcal{B}$ can eventually reach a
             final state from $F$ or an accepting state from $B$, only by taking
             transitions from $T$.
  \item There is a sequence $t_1t_2…\in T^\omega$ with 
             $t_i=(s_i,g_i,s_i')$ in which each $t\in T$ occurs infinitely 
             often and $g_1g_2…\in\Gamma^\omega$ is an output of $\mathcal{A}$ starting in $q$.
  \item There is a reachable configuration $(q,f)$ such that for all data
             values $d\in\Delta$, either
  \begin{inparaenum}[(i)]
    \item there is no corresponding instance of $\mathcal{B}$ ($f(d) = \bot$), or
    \item the corresponding instance of $\mathcal{B}$ is in an accepting state 
          ($f(d)\in F$), or
    \item there is a transition $(f(d),g,s)\in T$ for some $s\in S$ and some $g\in\Gamma$.
  \end{inparaenum}
\end{enumerate}

\begin{lemma}\label{lem:sda-properties}
  The \sDA $\mathcal{D}$ accepts an infinite data word iff there are $T\subseteq\delta$, $q\in Q$ such that the properties (A1)--(A3) hold.
\end{lemma}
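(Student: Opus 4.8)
The plan is to prove both directions of the biconditional by connecting an accepting infinite run of the \sDA $\mathcal{D}$ to the combinatorial data captured by a set $T\subseteq\delta$ of class-automaton transitions together with a base-automaton state $q$. The guiding intuition, already set up in the surrounding text, is that $T$ should be exactly the set of class transitions taken infinitely often along the run, $q$ should be a base state occurring at the ``splitting point'' from which only tail behaviour remains, and the reachable configuration in (A3) should be (an abstraction of) the configuration at that splitting point. I would organize the argument so that each of (A1), (A2), (A3) isolates one independent obligation that a genuine accepting run must fulfil, and then show conversely that these three obligations suffice to stitch together an accepting run.

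For the forward direction, I would start from an accepting infinite run $\rho$ and let $T$ be the set of transitions of $\mathcal{B}$ taken infinitely often and $q$ a base state that recurs at positions beyond which every transition taken is in $T$; such a position exists because $\delta$ is finite, so only finitely many transitions are taken finitely often, and past the last occurrence of any of them only $T$-transitions remain. Property (A2) then follows by reading off the base automaton's output sequence from this tail of $\rho$ together with the class transitions it induces, using that each $t\in T$ recurs infinitely often by construction. Property (A3) follows by taking $(q,f)$ to be the configuration at the chosen splitting position: each live data value's instance either has already stopped reading (accepting via case (ii), or never resumes), or is read again later, in which case its next transition lies in $T$ and gives case (iii). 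Property (A1) reflects the acceptance condition on each spawned instance of $\mathcal{B}$: an instance that keeps moving must visit a Büchi-accepting state infinitely often, which it can only do through $T$-transitions, giving reachability of $F\cup B$ within $T$.

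For the backward direction, given $T$ and $q$ satisfying (A1)--(A3), I would reconstruct an accepting run. I would use (A3) to drive a finite prefix run of a derived \sDA reaching the configuration $(q,f)$ — here the earlier observation that finite \sDA acceptance reduces to \pDA emptiness, or simply that $(q,f)$ is reachable by assumption, supplies the prefix. From that configuration I would extend to an infinite run using the sequence witnessing (A2) for the global (base-automaton) behaviour, and then argue that every class instance can be made accepting: instances active in the tail are serviced by the infinitely-recurring $T$-transitions so they satisfy the Büchi condition, while instances in case (ii) are already in $F$, and case (i) instances are absent; (A1) guarantees that any instance caught mid-transition can still be steered into $F\cup B$ using only transitions that genuinely recur, so no instance is stranded in a non-accepting state.

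The main obstacle I expect is the careful handling of the acceptance condition (II) for infinite words, which imposes distinct requirements on data values occurring \emph{finitely} versus \emph{infinitely} often, and simultaneously on the base automaton. Matching these precisely to the three properties — in particular showing in the backward direction that scheduling the infinitely-many occurrences of each $T$-transition can be realized by some actual data-word with the right recurrence pattern (so that each instance that must be Büchi-accepting is visited infinitely often while finitely-occurring values end in $F$) — is the delicate combinatorial core. A clean way to discharge it is to build the tail as a shuffle of class strings, exploiting the suffix-closed structure of $\mathcal{B}$ (every state is initial) so that new instances may be spawned freely in any state, which removes any compatibility constraint at the splitting point and lets the witness sequence from (A2) be distributed across fresh and recurring data values as needed.
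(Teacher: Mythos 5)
Your proposal is correct and takes essentially the same route as the paper's proof: the same forward direction (let $T$ be the class transitions taken infinitely often, split the run at a position after which only $T$-transitions occur, and read off witnesses for (A1)--(A3) from the tail and the split configuration) and the same converse reconstruction (reach $(q,f)$ via (A3), replay the (A2) sequence, and use suffix-closedness to spawn instances in arbitrary states while (A1) plus the infinite recurrence of each $t\in T$ lets every instance be scheduled into $F\cup B$). The ``delicate combinatorial core'' you identify and your shuffle/free-spawning resolution correspond exactly to the queue-based scheduling argument in the paper's appendix.
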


($\Rightarrow$) For an accepting run of $\mathcal{D}$ take $T$ to be the set of transitions of $\mathcal{B}$ taken infinitely often on $\rho$.
Let $i$ be a position after which only transitions from T are taken and $q_i\in Q$ the base automaton state at position $i$.
If (A1) did not hold because of some transition $t\in T$ then the instance of $\mathcal{B}$ performing it after position $i$ would reject.
The suffix of $\rho$ starting from $i$ is a witness for (A2).
The configuration $(q_i,f_i)\in Q\times\mathfrak{F}$ at position $i$ is a witness for (A3) as in particular all instances of $\mathcal{B}$ that terminate before $i$ accept.

($\Leftarrow$)
Given $T$ and $q$ we can construct an accepting run of $\mathcal{D}$.
Property (A3) allows us to find a run to some configuration $c\in Q\times\mathfrak{F}$ from which we are able to continue only using transitions from $T$.
$\mathcal{D}$ can then continue performing the sequence of transitions $\tau\in T^\omega$ provided by (A2) since
all states of $\mathcal{B}$ are initial and hence new instances can be spawned in any state when needed.
Now, (A1) and the fact that each transition in $T$ occurs infinitely often on $\tau$ guarantee that for any (non-terminated) instance of $\mathcal{B}$ we can choose a subsequence of $\tau$ that can be performed by $\mathcal{B}$ in order to accept.

\begin{lemma}\label{lem:sda-construction}
  Given $T\subseteq\delta$ and $q\in Q$, it is decidable in \textsc{ExpSpace} if properties (A1)--(A3) are satisfied.
\end{lemma}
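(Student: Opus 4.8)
The plan is to analyze each of the three properties (A1)--(A3) separately and show that each can be checked within \textsc{ExpSpace}, since deciding their conjunction is then just the (space-bounded) combination of three such procedures. The key observation is that $T$ and $q$ are given as input, and the class automaton $\mathcal{B}$ and base automaton $\mathcal{A}$ are of size polynomial in the input, so any construction that stays polynomial or singly-exponential in $|\mathcal{B}|$, $|\mathcal{A}|$ will yield the claimed bound (note that $\mathcal{B}$, $\mathcal{A}$ may themselves have arisen from an exponential translation, but that exponential blow-up is accounted for \emph{outside} this lemma).

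For property (A1), I would first restrict the transition relation $\delta$ of $\mathcal{B}$ to the subset $T$, obtaining a sub-automaton $\mathcal{B}|_T$. The requirement is that from the target state of every transition in $T$, some state in $F\cup B$ is reachable using only $T$-transitions. This is a straightforward reachability question in the finite graph underlying $\mathcal{B}|_T$: I would compute, by backward graph search from $F\cup B$, the set of states that can reach an accepting state along $T$-edges, and then check that every transition $t=(s,g,s')\in T$ has its target $s'$ in this set. This runs in time polynomial in $|\mathcal{B}|$, hence well within \textsc{ExpSpace} (indeed in \textsc{PTime}). For property (A2), I would build a product of the base automaton $\mathcal{A}$ (viewed as generating output words $g_1g_2\dots$ over $\Gamma$) with an automaton tracking $T$-transitions of $\mathcal{B}$, synchronised on the shared label $g\in\Gamma$. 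The condition that \emph{each} $t\in T$ occurs infinitely often is a generalised B\"uchi condition, which I would encode by a standard counter/flag construction cycling through the finitely many transitions of $T$; emptiness of the resulting B\"uchi automaton, of size polynomial in $|\mathcal{A}|\cdot|\mathcal{B}|$, is then decidable by a cycle-detection (nested depth-first or SCC) argument, again comfortably within the bound, provided one starts the $\mathcal{A}$-component in state $q$.

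The main obstacle is property (A3), the reachability of a configuration $(q,f)$ in which every instance of $\mathcal{B}$ satisfies the disjunction (i), (ii), or (iii). This is genuinely a question about runs of the \sDA $\mathcal{D}$, not just its finite underlying graphs, because $f$ ranges over memory functions tracking unboundedly many data values. My approach here is to reduce this to a \emph{finite-word} emptiness check on a derived \sDA, exploiting the machinery already established before Theorem~\ref{thm:expspace-da}: recall that finite-word emptiness of an \sDA is decided by reversal into a \pDA and invoking its \textsc{ExpSpace} coverability procedure. Concretely, I would modify $\mathcal{B}$ so that its final states are precisely those class-automaton states $s$ that are either already accepting ($s\in F$) or are the source of some transition in $T$ (i.e.\ satisfy the alternative (iii)); the ``no instance'' case (i) needs no marking since absent data values impose no constraint. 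Then a finite run of the modified \sDA that terminates with $\mathcal{A}$ in state $q$ and every live instance of $\mathcal{B}$ in one of these designated final states is exactly a witness for (A3). Since this is a finite-word \sDA emptiness problem of size polynomial in $|\mathcal{D}|$, Theorem~\ref{thm:expspace-da}'s finite-word case (or rather the reversal-to-\pDA argument underlying it) applies and gives an \textsc{ExpSpace} decision procedure.

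Putting the three checks together, I would run them in sequence, reusing workspace, so the total space remains \textsc{ExpSpace}; the correctness of the overall procedure follows immediately from Lemma~\ref{lem:sda-properties}, which already certifies that (A1)--(A3) characterise acceptance. The subtle point deserving care in the write-up is the precise marking of final states in the (A3) reduction, ensuring that the reachability of a \emph{single} configuration $(q,f)$ is faithfully captured as a finite-word acceptance condition on all simultaneously-live instances of $\mathcal{B}$, and that the base-automaton target state $q$ is correctly pinned down as the terminating control state.
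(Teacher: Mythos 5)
Your proposal is correct and takes essentially the same route as the paper's own proof: (A1) as plain reachability in the graph of $\mathcal{B}$ restricted to $T$, (A2) via an (at most exponential-size) Büchi automaton obtained from $\mathcal{A}$ with outputs read as inputs---the paper intersects with $\G\F g$ for each label of a $T$-transition, which is equivalent to your generalised-Büchi encoding over transitions since the $t_i$ in (A2) need not chain into a run---and (A3) via exactly the paper's modified \sDA $\hat{\mathcal{D}}$ with base final state set $\lbrace q\rbrace$ and class final states $F\cup{^\bullet}T$ (sources of $T$-transitions), decided by the finite-word reversal-to-\pDA coverability argument. No gaps to report.
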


Verifying (A1) is a reachability problem in the finite graph of $\mathcal{B}$ restricted to $T$.
Further, we can build a Büchi automaton over $\Gamma$ that is non-empty iff (A2) holds:
In $\mathcal{A}$, take the outputs as inputs and remove all transitions with a label not occurring on any transition in $T$.
For each transition $(s,g,s')\in T$, intersect the automaton with the property $\G\F g$.
The size of the resulting Büchi automaton is at most $c^{({|Q|}^2)}$ for a constant $c$.
Finally, (A3) can be verified by constructing the \sDA $\hat{\mathcal{D}}=(\hat{\mathcal{A}}, \hat{\mathcal{B}})$ with $\hat{\mathcal{A}} = (Q,\Sigma,\Gamma,\delta_\mathcal{A},Q_0,\lbrace q\rbrace,\emptyset)$ and $\hat{B} =(S,\Sigma,\delta,I, F\cup{^\bullet}T)$, where ${^\bullet}T := \lbrace s\in S\mid\exists_{s'\in S,g\in\Gamma}: (s,g,s')\in T\rbrace$, and checking it for emptiness in exponential space as above.

To conclude, using Lemma~\ref{lem:sda-properties} and~\ref{lem:sda-construction} we can check the \sDA $\mathcal{D}$ for emptiness wrt.\ infinite words by nondeterministically guessing a state $q\in Q$ and a set of transitions $T\subseteq\delta$ and verifying (A1)--(A3) in exponential space.

\subsection{From $\BDLTL^\pm$ to Data Automata}
\label{ssc:bdltl2da}
By revisiting the construction given in \cite{DBLP:conf/fsttcs/KaraSZ10} \fBDLTL and \pBDLTL can be translated into at most exponentially larger \sDA and \pDA, respectively.

The first step is to eliminate multiple attributes by translating a formula into a satisfiability-equivalent one over single-attributed data words.
The basic idea is to encode $A$-attributed data words by using segments of length $|A|$, each representing a single position in the original word.
The temporal operators are adjusted by offsets according to the segment length and positioning information within a fragment is encoded by additional propositions.
The construction, given in detail in \cite{DBLP:journals/corr/abs-1010-1139}, only uses additional operators $\C^r_x$ and a $\BDLTL^\pm$ formula hence stays in the respective fragment. 
Further, it is at most polynomially larger than the original one.

Second, the obtained formula $\varphi$ is translated into a data automaton.
The largest (absolute) value used by $\C^{r}_x$ operators in $\varphi$ is denoted $r_{max}$.
The set $AP_\varphi$ of atomic propositions used by $\varphi$ is extended by propositions $p^\psi_j$ and $=_j$ for each $-r_\mathrm{max}\le j\le r_\mathrm{max}$ and subformula $\psi$ of $\varphi$.
$\mathcal{A}_\varphi$ is supposed to check, that $p^\psi_j$ holds at some position $i$ iff $\psi$ holds at position $i + j$.
A proposition $=_j$ is supposed to hold at position $i$ iff position $i+j$ carries the same data value. 
Checking all this and additionally that $p_\varphi$ holds at the very first position, $\mathcal{A}_\varphi$ accepts exactly the models of $\varphi$, up to the additional propositions.

Correct occurrence of propositions $p^\psi_0$ where $\psi$ is a position formula can be checked by the base automaton.
In all other cases $\psi$ can be assumed to be of the form $\C^r_x \psi$ where $\psi$ is not a position formula.
These formulae can be checked using the local automaton.
The context information provided for $r_\mathrm{max}$ positions into the past as well as the future allows the local automaton to determine the effect of the $C^r_x$ operator without any additional temporal navigation.

Propositions of the form $p^\psi_j$ with $j \ne 0$ can be handled by the base automaton.
Note that in contrast to the construction given in \cite{DBLP:conf/fsttcs/KaraSZ10} we require these propositions as a suffix or prefix closed local automaton can not keep track of this information itself.
What remains is to verify the correct annotation by the propositions $=_j$.
This can easily be done using a register automaton $\mathcal{R}$ that maintains the frame of data values and verifies the propositions.
While it is known that \RA can be translated into \DA it is not clear how to adapt the construction given in \cite{DBLP:conf/fsttcs/KaraSZ10} for \sDA and \pDA.

\begin{lemma}[Simulating RA]\label{lem:ra}
Given an \sDA or \pDA $\mathcal{D}$ and a register automaton $\mathcal{R}$, we can construct an \sDA or \pDA $\mathcal{D}'$, respectively, such that $\mathcal{L}(\mathcal{D})\cap\mathcal{L}(\mathcal{R}) = \emptyset$ iff $\mathcal{L}(\mathcal{D}') = \emptyset$.
$\mathcal{D}'$ is of polynomial size in the size of $\mathcal{D}$ and $\mathcal{R}$ and of exponential size in the number of registers of $\mathcal{R}$.
\end{lemma}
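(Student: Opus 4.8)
The plan is to encode an accepting run of $\mathcal{R}$ directly into the data word read by $\mathcal{D}'$ and to split its verification between the base automaton and the class automaton, so that the result stays of the same restricted kind as $\mathcal{D}$. I work over an extended output alphabet carrying, at every position, (i) the transition of $\mathcal{R}$ taken there, i.e.\ the tested sets $R^=, R^{\neq}$ and the written register $x\in[k]$, and (ii) a guessed Boolean vector $e\in\{0,1\}^{[k]}$ recording, for each register $r$, whether $r$ currently holds the value read at this position. The base automaton $\mathcal{A}'$ is the product of $\mathcal{A}$ with the finite control of $\mathcal{R}$, augmented by this vector; it locally enforces the tests ($R^=\subseteq\{r:e_r\}$, $R^{\neq}\subseteq\{r:\lnot e_r\}$) and that $\mathcal{R}$ moves along a valid transition. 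Tracking the vector over $[k]$ is what makes $\mathcal{A}'$ of size $2^{O(k)}$ while keeping it polynomial in $|\mathcal{A}|$ and $|\mathcal{R}|$. The lemma is then phrased as an emptiness equivalence over the extended alphabet: an annotated word accepted by $\mathcal{D}'$ exists iff $\mathcal{L}(\mathcal{D})\cap\mathcal{L}(\mathcal{R})\neq\emptyset$.

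The remaining obligation is data-consistency of the guessed vectors: for each register $r$, $e_r$ must be true at position $i$ exactly when the most recent write into $r$ stored the value read at $i$. Since $x$ is annotated at every position, the base automaton knows the global write pattern — the maximal intervals on which each $r$ keeps a fixed value — and it emits the boundaries of these intervals. What it cannot determine is whether the value frozen into $r$ equals the current one, because it does not track data values; that equality is a same-class comparison and is delegated to the class automaton. On the class string of a value $d$, $\mathcal{B}'$ checks that $e_r$ is claimed at a $d$-position $i$ only if some earlier $d$-position wrote $r$ and, by the boundary annotation, $r$ was not rewritten in between. In this way the base automaton localises every comparison to a pair of positions that the class automaton can see on one class.

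I expect the main obstacle to be precisely this crossing of classes, compounded by the closedness requirement. The write that ends a register's lifetime occurs at a position carrying a \emph{different} value, hence on a different class string that $\mathcal{B}'$ never visits while processing $d$; so the class automaton cannot, on its own, notice that $d$ was evicted from $r$. The naive remedy — remember the initiating write and wait for an eviction event — is exactly what the \sDA/\pDA discipline rules out: in an \sDA every class-automaton state is initial, so the accepted class language must be suffix-closed, and in a \pDA every state is final and Büchi-accepting, so it must be prefix-closed. The delicate step is therefore to express the register check in the matching closure direction, pushing all ``an earlier write exists / no rewrite since'' reasoning into the boundary annotations emitted by $\mathcal{A}'$, so that the class automaton only runs a suffix-closed (resp.\ prefix-closed) local consistency check that never recalls an unbounded past nor demands a future eviction. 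Reformulating RA semantics into such a one-sidedly closed per-class condition is the real content of the lemma.

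Finally I would settle closure-preservation, correctness, and the size bound. The constructed class automaton is the product of $\mathcal{B}$ with the register-check automaton, and a product state is initial (resp.\ final/Büchi) iff both components are; hence, with the check built in the correct closure direction, $\mathcal{B}'$ is an \sDA (resp.\ \pDA) class automaton whenever $\mathcal{B}$ is. For soundness, any $w\in\mathcal{L}(\mathcal{D})\cap\mathcal{L}(\mathcal{R})$ is annotated with a fixed accepting run of $\mathcal{R}$ and the \emph{true} register-holding vectors, and this annotated word is accepted by $\mathcal{D}'$. For completeness, any word accepted by $\mathcal{D}'$ carries, by the two verification layers, a genuinely consistent register behaviour, from which one reconstructs an accepting run of $\mathcal{R}$ together with an accepting run of $\mathcal{D}$ on the underlying data word, placing it in the intersection. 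The base automaton is polynomial in $|\mathcal{A}|$ and $|\mathcal{R}|$ and of size $2^{O(k)}$ from the vector and write pattern, and the class automaton is polynomial in $|\mathcal{B}|$ and $k$, yielding the claimed polynomial-times-exponential bound.
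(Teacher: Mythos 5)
Your overall architecture matches the paper's: the base automaton of $\mathcal{D}'$ is the product of $\mathcal{A}$, the finite control of $\mathcal{R}$, and bookkeeping over $2^{[k]}$ (whence the exponential dependence on $k$ only); the class automaton is the product of $\mathcal{B}$ with a per-class record of the registers currently holding that class's value; the statement is an emptiness equivalence over an extended alphabet; the size bounds come out the same. But at exactly the point you yourself call ``the real content of the lemma'' you stop: you never give the mechanism that makes the per-class register check one-sidedly closed, and the fallback you sketch cannot work. The eviction of a value $d$ from register $r$ happens at a position whose data value is \emph{not} $d$, so that position does not occur on the class string of $d$ at all — no ``boundary annotation'' emitted by $\mathcal{A}'$ is ever read by the class-automaton instance processing $d$. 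Nor can the base automaton compensate by annotating the $d$-positions themselves: ``$r$ was not rewritten since the last occurrence of this data value'' is a per-class condition over unboundedly many simultaneously live classes, which a finite-state transducer that cannot track data values cannot compute. So the consistency of your guessed vectors $e$ is left unverified, and the completeness direction (reconstructing an accepting run of $\mathcal{R}$ from a word accepted by $\mathcal{D}'$) fails.

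The paper's missing device is to \emph{materialize the eviction as an extra position of the word}: the alphabet is extended by a fresh letter $\$$, and before re-storing into an occupied register $r$ the base automaton performs a $\$$-step whose (guessed) data value is the one currently held in $r$; the unique class instance in a state marked ``stored in $r$'' must take the corresponding freeing transition. This simultaneously verifies the guess and places the eviction event on the evicted value's own class string, which is the only place the class automaton can see it. Note this is precisely the ``remember the write, wait for the eviction event'' scheme that you dismissed as incompatible with the \sDA/\pDA discipline. It is in fact compatible, because store and free actions occur in matched, non-nested pairs on each class string: making all class states final (\pDA) only permits stopping between a store and its free, and making all states initial (\sDA) only permits starting with a free whose store fell before the suffix — and since the lemma asserts an emptiness equivalence rather than a language equality, neither relaxation does harm. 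Without this (or an equivalent) cross-class synchronization device, your two verification layers never connect.
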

The basic idea is to extend the alphabet with a new letter $\$$.
At a position where a data value would have been stored in a register $r$, the class automaton changes its state and thereby marks the current data values as being stored in $r$.
The transducer keeps track of all registers currently containing data values.
Before ``storing'' another data value in an already occupied register $r$, the transducer performs an additional step accepting $\$$ and demanding that one instance (the instance associated with the value currently ``stored'' in $r$) changes its state, such that the data value is no longer marked as being stored in $r$.
Both, a suffix or a prefix closed base automaton suffices to implement this approach.

By Lemma \ref{lem:ra}, we can perform all required checks using an \sDA or \pDA, respectively.
Translating \LTL formulae into word automata results in a state space that is at most exponential in the size of the formula and thus the construction gives an up to exponential overall blowup.
Note that we assume a unary encoding for the offsets $r$ in formulae $\C^r_x$. 

By Theorem~\ref{thm:expspace-da}, the translation proofs our completeness result for $\BDLTL^\pm$.
\begin{theorem}[\textnormal{\textsc{2ExpSpace}}-completeness]\label{thm:bdltl-completeness}
  Satisfiability problems of \fBDLTL and \pBDLTL are 2\textsc{ExpSpace}-complete over finite and infinite data words.
\end{theorem}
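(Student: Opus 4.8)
The statement has two halves and the plan is to treat them separately. The lower bound is already in hand: Theorem~\ref{thm:bdltl-hardness} gives \textsc{2ExpSpace}-hardness of both fragments over finite and infinite data words, so it remains only to establish membership in \textsc{2ExpSpace}. For this I would assemble the translation chain sketched in this subsection into a single reduction from satisfiability of a $\BDLTL^\pm$ formula to emptiness of a restricted data automaton, and then invoke the complexity bound of Theorem~\ref{thm:expspace-da}.

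Concretely, starting from a formula $\varphi$ of \fBDLTL (resp.\ \pBDLTL), the first step is to eliminate multiple attributes, passing to a satisfiability-equivalent single-attributed formula that stays in the same fragment with only polynomial blow-up. The second step constructs a data automaton whose restriction (prefix- or suffix-closed) matches the navigation direction of the fragment. I would extend the proposition set by the annotations $p_j^\psi$ and $=_j$, let the base transducer verify both the position-level propagation of the $p_j^\psi$ and the root obligation $p_\varphi$, and delegate the genuinely class-local obligations (the $\C_x^r\psi$ with $\psi$ not a position formula) to the class automaton, which can decide them from the $r_\mathrm{max}$-window of context without any further temporal navigation. The outstanding consistency of the $=_j$ annotations is the only part that neither component can carry out directly; here I would fold in a register automaton $\mathcal{R}$ maintaining the data frame and apply Lemma~\ref{lem:ra} to absorb the intersection-emptiness check into a single \sDA or \pDA of the same kind.

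The size analysis then closes the argument. Each \LTL-style subformula compiles into a word automaton of exponential size, and the register automaton needed for the $=_j$ checks uses $O(r_\mathrm{max})$ registers, which is polynomial in $|\varphi|$ under the assumed unary encoding of offsets, so the folding of Lemma~\ref{lem:ra} contributes at most a further exponential factor. Hence the overall automaton is at most exponential in $|\varphi|$ and is empty iff $\varphi$ is unsatisfiable, uniformly for finite and infinite words. Running the \textsc{ExpSpace} emptiness procedure of Theorem~\ref{thm:expspace-da} on an input of exponential size then uses space $2^{2^{\mathrm{poly}(|\varphi|)}}$, i.e.\ \textsc{2ExpSpace}, which together with the hardness yields completeness.

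I expect the main obstacle to be ensuring that the construction never leaves the restricted automaton class: an unrestricted \DA could verify the $=_j$ frame directly, but a prefix- or suffix-closed automaton cannot keep this bookkeeping in its class component, so the entire weight rests on Lemma~\ref{lem:ra} together with a check that its base-automaton side is compatible with \emph{both} the prefix- and the suffix-closed regime. A secondary point to verify is that the infinite-word acceptance conditions survive the single-attribute encoding and the register-automaton folding, so that the same reduction settles the $\omega$-case without a separate argument.
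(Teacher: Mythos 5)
Your proposal follows the paper's own proof essentially step for step: the lower bound from Theorem~\ref{thm:bdltl-hardness}, polynomial elimination of multiple attributes, the annotated translation (propositions $p^\psi_j$, $=_j$) into an at most exponentially larger \sDA or \pDA with the $=_j$ frame absorbed via the register-automaton folding of Lemma~\ref{lem:ra}, and then the \textsc{ExpSpace} emptiness check of Theorem~\ref{thm:expspace-da} on the exponential-size automaton, giving \textsc{2ExpSpace}. The two obstacles you flag at the end are precisely the points the paper discharges through the statement of Lemma~\ref{lem:ra} (which works uniformly for both closed regimes) and through Theorem~\ref{thm:expspace-da} covering infinite words, so your argument is correct and no genuinely different route is taken.
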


\section{Ordered Navigation on Multi-attributed Data Words}

As we have seen, multiple attributes do not actually enrich the models of \BDLTL.
They can be eliminated due to the inability of \BDLTL to reason about their interdependencies.
A natural extension is thus to allow for so-called \emph{tuple navigation}, e.\,g.,
by adding an operator $\C^r_{(x,y)}$ binding a tuple instead of single values.
Class operators such as $\X^=$ and $\S^=$ then navigate along the positions of a multi-attributed data word that carry both values.
Unfortunately, it is well-known that such an extension leads to undecidability.
For example, \LRV is known to be undecidable when being extended by tuple navigation \cite{DBLP:conf/lics/DemriFP13}. 
This implies undecidability of such an extension of $\BDLTL^+$ and by similar arguments $\BDLTL^-$.
\begin{proposition}\label{prp:tuple-nav-undecidable}
  The satisfiability problem of $\BDLTL^\pm$ with tuple navigation is undecidable. 
\end{proposition}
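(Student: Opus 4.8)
The plan is to reduce from a known undecidable problem for a logic with tuple navigation that is already established in the literature. The excerpt tells us that \LRV extended by tuple navigation is undecidable \cite{DBLP:conf/lics/DemriFP13}, so the natural strategy is to show that $\BDLTL^+$ with tuple navigation can express (a satisfiability-equivalent encoding of) the tuple-extended fragment of $\LRV$ for which undecidability is known. Recall from the earlier discussion that the future \LRV operators $x\approx\X^ry$ and $x \approx \langle\varphi?\rangle y$ are already translated into $\BDLTL^+$ via $\C_x^r@y$ and $\C_x^0\X^=\F^=(@y \land \varphi)$, respectively. The first task is therefore to lift these translations from single-attribute bindings to tuple bindings: the tuple-navigating \LRV operator comparing a pair of current values against a pair observed in the future should be expressible using $\C^r_{(x,y)}$ together with the tuple versions of $@$, $\X^=$ and $\F^=$.

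Concretely, I would first fix the undecidable tuple-\LRV fragment from \cite{DBLP:conf/lics/DemriFP13} and write down its semantics over multi-attributed data words. Second, I would give a syntactic map from each tuple-\LRV operator into $\BDLTL^+$-with-tuple-navigation, mirroring the single-value translation but binding tuples via $\C^r_{(x,y)}$; the repeating-value obligation ``some current pair is seen again later'' becomes $\C^0_{(x,y)}\X^=\F^=(@x\land@y)$ or a close variant, where the class operators now navigate along positions agreeing on \emph{both} components. Third, I would argue satisfiability is preserved by this translation, so decidability of $\BDLTL^+$-with-tuple-navigation would contradict the undecidability of tuple-\LRV. For $\BDLTL^-$ the statement claims undecidability ``by similar arguments,'' so I would symmetrically reduce from the \emph{past} variant of tuple-\LRV (the operators $x\approx\Y^ry$ and $x \approx \langle\varphi?\rangle^{-1}y$), whose undecidability follows by the same adaptation of \cite[Lemma 15]{DBLP:conf/lics/DemriFP13} already invoked for Theorem~\ref{thm:bdltl-hardness}, translating past obligations using $\Y^=$ and $\S^=$.

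The main obstacle I anticipate is not the reduction itself but verifying that the translation genuinely stays within the \emph{restricted} fragment while faithfully capturing tuple repetition. In the single-attribute case the equivalences rely on $\C^0_x\X^=\F^=(\ldots)$ navigating along a class string determined by one value; with tuples one must check that the class operators $\X^=$, $\F^=$ (resp.\ $\Y^=$, $\S^=$) navigate along exactly the positions agreeing on the full bound tuple, and that no past (resp.\ future) class navigation sneaks in. A secondary subtlety is that \LRV's $x\not\approx\langle\varphi?\rangle y$ asserts a value is \emph{never} repeated; expressing non-repetition requires negating a class-reachability formula, and I must confirm this negation does not force a forbidden operator into the fragment. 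I would therefore take care that each encoded obligation uses only the allowed directional class operators, so that the reduction lands in $\BDLTL^+$ (resp.\ $\BDLTL^-$) with tuple navigation and the undecidability transfers cleanly.
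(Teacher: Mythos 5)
Your treatment of $\BDLTL^+$ is essentially the paper's: undecidability is inherited because $\BDLTL^+$ with tuple navigation subsumes $\LRV^\top$ with tuple navigation, which is undecidable already over finite words \cite{DBLP:conf/lics/DemriFP13}, and the single-value translations $x\approx\X^r y \mapsto \C^r_x @y$ and $x\approx\langle\varphi?\rangle y \mapsto \C^0_x\X^=\F^=(@y\land\varphi)$ lift directly to tuple binders $\C^r_{(x,y)}$. Note that since $\LRV^\top$ already suffices, your worry about encoding the non-repetition operator $x\not\approx\langle\varphi?\rangle y$ never materializes.

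For $\BDLTL^-$, however, there is a genuine gap. You base the reduction on the undecidability of a \emph{past} variant of tuple-navigating \LRV, justified ``by the same adaptation of Lemma 15'' of \cite{DBLP:conf/lics/DemriFP13} --- but that lemma is the chain-automata encoding underlying the \textsc{2ExpSpace} \emph{lower bound} (the result adapted for Theorem~\ref{thm:bdltl-hardness}); it is unrelated to the undecidability of tuple navigation, which is a separate theorem with a separate proof, and no past-tuple variant of it exists in the cited literature. As written, your $\BDLTL^-$ argument therefore rests on an unproven premise, and closing it would mean redoing a counter-machine encoding with past obligations. The paper sidesteps this entirely with a syntactic reversal reduction from the already-settled future case: given a $\BDLTL^+$ formula $\varphi$ (with tuple navigation) over finite words, swap all future and past operators, relativize every subformula $\psi$ to $\neg\$\land\psi$ for a fresh letter $\$$, and test $\F(\hat{\varphi}\land\X\G\$)$; this $\BDLTL^-$ formula is satisfiable iff $\varphi$ is, since a model of $\varphi$ reversed and padded with $\$$'s yields a model of the new formula and vice versa, and the operator swap keeps all class navigation in the past fragment while preserving tuple binders. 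Adopting such a reversal argument would transfer undecidability between the mirror fragments without requiring any new result about \LRV.
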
 
To overcome the restrictions of \BDLTL while maintaining decidability, at least for reasonable fragments, we define the logic \NDLTL.
\begin{definition}[\NDLTL]
  The logic \emph{Nested Data LTL (\NDLTL)} consists of \BDLTL formulae where the set of attributes $A$ is enriched by a tree order relation $\le\subseteq A\times A$. 
  The fragments \emph{$\NDLTL^+$} and \emph{$\NDLTL^-$} are obtained by the same restrictions as for $\BDLTL^+$ and $\BDLTL^-$, respectively.
\end{definition}
The quantifier $\C_x^r$ in \NDLTL binds not only the value of attribute $x\in A$ but also the values of all smaller attributes.
Class operators, such as $\U^=$, then navigate according to this tuple of values respecting, however, the attribute order in the following sense.

For an attribute $x\in A$, with downward-closure $x^\downarrow$ consisting of attributes $x_1<x_2<…<x_n$, a mapping $\vec{d}\in \Delta^{x^\downarrow}$ induces a vector of data values $ ( \vec{d}(x_1), \vec{d}(x_2), …, \vec{d}(x_n))$.
By $\vec{d} \simeq \vec{d}'$ we denote that $\vec{d}$ and $\vec{d}'$ have the same such vector representation.
Note, this can differ from the element-wise equality of the functions.
Using this we define for a data word $w\in(\Sigma\times\Delta^A)^\infty$ and $\vec{d}\in\Delta^{x^\downarrow}$ the set $\mathrm{pos}_\vec{d}(w)$ of positions $i$ in $w$ where there is an attribute $y\in A$ such that $\vec{d} \simeq \vec{d}_i|_{y^\downarrow}$. 
\NDLTL class formulae are interpreted over models $(w,i,\vec{d})$ where $i\in\mathbb{N}$, $0\le i<|w|$, is a position in $w$ and $\vec{d}\in \Delta^{x^\downarrow}$ for some $x\in A$.
For position formulae $\varphi$, $x,y\in A$ and $r\in\mathbb{Z}$, we define the semantics of the $\C^r_x$ operator and class formulae $\psi$ as follows.
\[
\begin{array}{lll}
  (w,i)         &\models\C^r_x \psi &\text{if }0<i+r<|w| 
                            \text{ and } (w,i+r,\vec{d}_i|_{x^\downarrow})\models\psi,\\
  (w,i,\vec{d}) &\models \varphi       &\text{if } (w,i) \models \varphi,\\
  (w,i,\vec{d}) &\models @x      &\text{if } \vec{d}_i|_{x^\downarrow} \simeq \vec{d},\\
  (w,i,\vec{d}) &\models \X^=\psi   &\text{if there is } j\in\mathrm{pos}_\vec{d}(w),\ j>i,\\
                         & &\text{and, for the smallest such }j,\ 
                            (w,j,\vec{d}) \models \psi,\\
  (w,i,\vec{d}) &\models \psi_1\U^=\psi_2 & \text{if } \exists_{j\in\mathrm{pos}_\vec{d}(w), j\ge i}:
                            (w,j,\vec{d})\models\psi_2 \\
                          &&\text{and }
                            \forall_{j'\in\mathrm{pos}_\vec{d}(w), j>j'\ge i}:
                            (w,j',\vec{d})\models\psi_1.
\end{array}
\]
As before, the operators $\Y^=$ and $\S^=$ are defined as expected. 
The semantics of boolean and \LTL operators in \NDLTL formulae remains as for \BDLTL.

\begin{lemma}\label{lem:encoding-rmcs}
  For every rMCS $\mathcal{M} = (Q,C, \delta, Q_0)$, there is an $\NDLTL^-$ formula $\Phi_\mathcal{M}$ over
  the set of propositions $AP = Q\cup\lbrace\mathrm{inc}, \mathrm{dec}, \mathrm{res}\rbrace\cup C$
  and attributes $A$ s.\,t.\ $\Phi_\mathcal{M}$ is satisfiable iff there is a data word 
  $w\in(2^{AP}\times\Delta^A)^\omega$ where $\str{w} = \lbrace p_0\rbrace\lbrace p_1\rbrace…$ ($p_i\in AP$) and $p_0p_1…$ 
  is a run in $\mathcal{M}$.
\end{lemma}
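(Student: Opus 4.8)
The plan is to represent a run $q_0(op_1,c_1,q_1)(op_2,c_2,q_2)\cdots$ of $\mathcal{M}$ by a data word whose string projection is the singleton sequence $\{q_0\}\{op_1\}\{c_1\}\{q_1\}\cdots$, so that each transition is spelled out by a short block of positions carrying the operation, the affected counter and the successor state. For every counter $c\in C$ I would introduce two attributes $e_c<t_c$, taking the $e_c$ as roots, so that $A$ is a forest of two-element chains. The intended meaning is that $e_c$ carries the current reset-epoch of $c$ and $t_c$ a per-increment token. Since $\C_{t_c}$ binds the whole pair $(\,\vec{d}(e_c),\vec{d}(t_c)\,)$ and class navigation along $t_c$ respects this pair, two positions lie on the same $t_c$-class string only if they agree on \emph{both} epoch and token. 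This is exactly the nesting that lets one fresh epoch value invalidate all tokens created before a reset, mirroring the iterator example of Section~\ref{sec:examples}.

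The formula $\Phi_\mathcal{M}$ is a conjunction of three groups of constraints. First, a pure \LTL formula (no class operators) forces the projection to be a singleton word starting in some $q_0\in Q_0$, alternating correctly between state-, operation- and counter-positions, such that every block $(q,op,c,q')$ is a transition of $\delta$ and consecutive blocks agree on the shared state. Second, for epoch management I express ``the epoch of $c$ is unchanged across a step'' by the bounded class quantifier $\C_{e_c}^{-1}@e_c$ and require it at every step that does not immediately follow a reset of $c$; at a position right after a reset I instead demand a \emph{fresh} epoch, $\C_{e_c}^{0}\neg\Y^=\top$. Third, for token management I require each increment of $c$ to carry a fresh pair, $\C_{t_c}^{0}\neg\Y^=\top$, and each decrement of $c$ to satisfy: the nearest earlier position on its $t_c$-class string is an increment of $c$ and no earlier decrement shares the pair, both being past class properties expressible with $\Y^=$ and $\S^=$. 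All clauses use only local offsets and past class operators, hence stay inside $\NDLTL^-$.

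For soundness ($\Rightarrow$) I read off from any model the map sending each decrement of $c$ to the unique increment carrying the same $(e_c,t_c)$ pair: freshness of increments makes it well defined and the ``no earlier decrement with the same pair'' clause makes it injective. Equality of the epoch components, together with epoch stability and freshness-at-reset, forces the matched increment into the same epoch, i.e.\ with no reset of $c$ in between, which is precisely the condition on the injection $f_\rho$ from the definition of an rMCS run. For completeness ($\Leftarrow$) I turn a run together with its witnessing $f_\rho$ into a model: assign a brand-new epoch value to $c$ after each reset and let $e_c$ carry the current epoch everywhere, give each increment a globally fresh token, copy that token onto its matched decrement, and fill the remaining $t_c$-slots with pairwise distinct dummies; a routine check then verifies every conjunct.

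The main obstacle is twofold. First, the matching clauses must capture non-negativity \emph{and} reset semantics using only past navigation, since $\NDLTL^-$ forbids $\X^=$ and $\U^=$; this is exactly what makes epochs (rather than forward checks) necessary and explains why the construction does not dualise trivially. Second, $\mathrm{pos}_{\vec{d}}(w)$ ranges over \emph{all} attributes $y$ with a matching $y^\downarrow$-projection, so two counters whose chains happen to carry equal values could interfere on a shared class string. I would rule this out by drawing epoch and token values of distinct counters from disjoint pools in the $(\Leftarrow)$ direction, and by never relying on cross-counter coincidences in the formula, so that each $t_c$-class string witnesses only genuine events of $c$.
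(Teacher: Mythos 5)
Your construction is essentially the paper's own: the paper likewise encodes a run as the singleton-letter word $\lbrace q_0\rbrace\lbrace op_1\rbrace\lbrace c_1\rbrace\lbrace q_1\rbrace\cdots$, uses exactly your two-level attribute pair per counter ($\hat{x}_c < x_c$ playing the roles of your $e_c < t_c$), a pure-\LTL shape formula $\Phi_\delta$, freshness at resets via $\C^0_{\hat{x}_c}\neg\Y^=\top$, and the decrement clause $\C^0_{x_c}\Y^=(\mathrm{inc}\land\X c)$, i.e., your ``nearest earlier position on the class string is an increment of $c$''. Note that injectivity of the matching already follows from this single clause, since a second decrement on the same pair would see the first decrement, not an increment, as its nearest same-pair predecessor; so your separate ``no earlier decrement shares the pair'' clause and the freshness of increment pairs are redundant, though harmless. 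Where you genuinely go beyond the paper is the epoch-stability conjunct $\C^{-1}_{e_c}@e_c$ at positions not immediately following a reset: the paper has no counterpart and instead asserts in one line that $\Phi_{\mathrm{res}}$, together with the fact that binding $x_c$ compares both attribute values, already rules out a reset between a matched increment and decrement. That remark tacitly assumes models always carry the \emph{current} epoch in $\hat{x}_c$; without your stability clause a model may reuse a stale epoch value at a decrement occurring after a reset and thereby match an increment from before the reset, so your extra conjunct is precisely what makes this step of the soundness direction airtight rather than implicit. You are also more explicit than the paper about the subtlety that $\mathrm{pos}_{\vec{d}}(w)$ quantifies over \emph{all} attributes $y$, so class strings can mix different counters; your disjoint-pool assignment handles the $(\Leftarrow)$ direction, and for full rigor in the $(\Rightarrow)$ direction you could additionally require $@t_c$ inside the $\Y^=$ of the decrement clause, so that the matched increment is guaranteed to carry the pair on counter $c$'s own attributes rather than coincidentally on another counter's.
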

Using a pair $x_c>\hat{x}_c$ of attributes for each counter $c\in C$, a formula $\bigwedge_{c\in C}\G((\mathrm{res} \land \X c) \to \C^0_{\hat{x}_c}\neg\Y^= \top)$ can be used for specifying resets and $\bigwedge_{c\in C}\G((\mathrm{dec} \land \X \mathrm{c}) \to \C^0_{x_c} \Y^=(\mathrm{inc} \land \X c))$ assures non-negative counter values.
It is clear that using a further constraint of the form $\F q$ allows for expressing control state reachability in rMCS, being \textsc{Ackermann}-hard by results on lossy channel systems in~\cite{DBLP:conf/mfcs/Schnoebelen10}.
Encoding such finite runs of an rMCS \emph{backwards}, can be done analogously within the fragment $\NDLTL^+$.

\begin{theorem}[\textnormal{\textsc{Ack}}-hardness]\label{thm:ack-ndltl}
  Satisfiability of $\NDLTL^\pm$ is \textsc{Ackermann}-hard.
\end{theorem}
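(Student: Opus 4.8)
The plan is to establish Ackermann-hardness of $\NDLTL^\pm$ by reducing from the control-state reachability problem of reset multi-counter systems, which is known to be \textsc{Ackermann}-hard via the connection to lossy channel systems~\cite{DBLP:conf/mfcs/Schnoebelen10}. The heart of the argument is Lemma~\ref{lem:encoding-rmcs}: for every rMCS $\mathcal{M}$ we can build an $\NDLTL^-$ formula $\Phi_\mathcal{M}$ that is satisfiable precisely when $\mathcal{M}$ admits a run whose string projection is a singleton-labelled data word encoding the sequence of operations. Given this lemma, the reduction is immediate: I would conjoin $\Phi_\mathcal{M}$ with a constraint $\F q$ asserting that the target control state $q$ is eventually reached, so that the combined formula is satisfiable iff $q$ is reachable in $\mathcal{M}$. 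Since the reduction is clearly polynomial-time computable, \textsc{Ackermann}-hardness of the reachability problem transfers to $\NDLTL^-$ satisfiability.

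The first concrete step is to fix the encoding of rMCS runs as data words. The run is represented as a word over $2^{AP}$ with $AP = Q \cup \{\mathrm{inc},\mathrm{dec},\mathrm{res}\} \cup C$, where each position carries a control state, an operation, and the affected counter, so that the string projection spells out the transition sequence. The crucial idea for faithfully tracking counter values is to model each increment/decrement pair as a matching of positions via shared data values: using the attribute pair $x_c > \hat x_c$ for counter $c$, I would force every $\mathrm{dec}$ on $c$ to have a strictly earlier matching $\mathrm{inc}$ on $c$ carrying the same $x_c$-tuple, expressed by $\bigwedge_{c\in C}\G((\mathrm{dec}\land\X c)\to\C^0_{x_c}\Y^=(\mathrm{inc}\land\X c))$. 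This guarantees non-negativity, since each decrement consumes a distinct prior increment. Resets are handled by requiring, via the finer attribute $\hat x_c$, that a reset severs the matching chain: $\bigwedge_{c\in C}\G((\mathrm{res}\land\X c)\to\C^0_{\hat x_c}\neg\Y^=\top)$ forces reset positions to begin a fresh data value so that increments before a reset cannot be matched by decrements after it.

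The main obstacle, and the step needing the most care, is verifying that this matching semantics exactly captures the rMCS run condition, i.e.\ that the injection $f_\rho$ demanded in the definition of an rMCS run exists if and only if a consistent data valuation satisfying $\Phi_\mathcal{M}$ exists. The forward direction builds the valuation from $f_\rho$ by assigning each matched inc/dec pair a common value; the backward direction must extract an injection from the local-past obligations, and here the tree-order discipline of \NDLTL is essential — the nested attributes $\hat x_c < x_c$ let the reset constraint act at the coarser level while preserving the finer matching, and the restriction to local \emph{past} operators $\Y^=$ and $\S^=$ keeps the whole formula inside $\NDLTL^-$. The remaining routine work is to add a boilerplate $\NDLTL^-$ formula enforcing that $\str{w}$ is a well-formed singleton-labelled encoding of a genuine transition sequence of $\mathcal{M}$ starting in $Q_0$ (consecutive labels respect $\delta$, exactly one control state, operation, and counter per position).

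Finally, to obtain hardness for $\NDLTL^+$ as well, I would encode the run \emph{backwards}: reading the rMCS run from its end, a decrement is matched to a \emph{future} increment and resets cut future matchings, so replacing $\Y^=$ by $\X^=$ and $\S^=$ by $\U^=$ throughout yields a dual formula entirely within $\NDLTL^+$. Since reset multi-counter systems are symmetric under reversal of the reduction target, this establishes \textsc{Ackermann}-hardness uniformly for both $\NDLTL^+$ and $\NDLTL^-$, completing the proof of the theorem.
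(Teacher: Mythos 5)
Your proposal is correct and follows the paper's own proof essentially verbatim: like the paper, you reduce from control-state reachability of reset multi-counter systems (\textsc{Ackermann}-hard via lossy channel systems~\cite{DBLP:conf/mfcs/Schnoebelen10}), using exactly the encoding of Lemma~\ref{lem:encoding-rmcs} with attribute pairs $x_c>\hat{x}_c$, the same formulae $\bigwedge_{c\in C}\G((\mathrm{res}\land\X c)\to\C^0_{\hat{x}_c}\neg\Y^=\top)$ and $\bigwedge_{c\in C}\G((\mathrm{dec}\land\X c)\to\C^0_{x_c}\Y^=(\mathrm{inc}\land\X c))$, and the additional conjunct $\F q$ for the target state. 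The paper also obtains the $\NDLTL^+$ case exactly as you do, by encoding finite runs backwards so that the local past operators become local future operators.
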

Similarly, $\G\F q$ expresses \emph{repeated} control state reachability in rMCS, being undecidable due to results in~\cite{DBLP:conf/stacs/BouajjaniM99}.
Further, full \NDLTL is already undecidable over \emph{finite} words.
This can be shown by considering the formula $\bigwedge_{c\in C}\G((\mathrm{inc} \land \X c) \to \C^0_{x_c}\X^=(\mathrm{dec} \land \X c))$ that ensures that for every incrementing operation, there is a following decrement on the same counter \emph{before the next reset} on that counter.
Thus, reset operations turn into \emph{zero tests}, allowing to encode Minski machine computations where reachability is undecidable.

\begin{theorem}[Undecidability]\label{thm:undecidable-ndltl}
Satisfiability of \NDLTL is undecidable over \emph{finite and infinite} data words.
Satisfiability of $\NDLTL^-$ is undecidable over \emph{infinite} data words.
\end{theorem}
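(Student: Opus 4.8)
The plan is to reduce from undecidable reachability problems on reset and Minsky counter machines, reusing the encoding of Lemma~\ref{lem:encoding-rmcs} together with the reductions sketched just above. I would establish the $\NDLTL^-$ claim first, as it also settles the infinite-word case of the \NDLTL claim. For an arbitrary rMCS $\mathcal{M}$ with a distinguished control state $q$, take the $\NDLTL^-$ formula $\Phi_\mathcal{M}$ of Lemma~\ref{lem:encoding-rmcs} and form $\Phi_\mathcal{M}\land\G\F q$. As $\G\F q$ contains no class operator, the conjunction stays in $\NDLTL^-$, and by the lemma its models are exactly the data words whose string projection $\str{w}$ spells out an infinite run of $\mathcal{M}$. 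Hence it is satisfiable iff $\mathcal{M}$ has an infinite run visiting $q$ infinitely often, the repeated control-state reachability problem for rMCS, which is undecidable by~\cite{DBLP:conf/stacs/BouajjaniM99}. Since $\NDLTL^-$ is a syntactic fragment of \NDLTL, the same formula also shows \NDLTL undecidable over infinite words.

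It remains to treat full \NDLTL over finite words, where I would first transfer Lemma~\ref{lem:encoding-rmcs} to finite words and finite runs, a routine adaptation. The new ingredient is the class formula
\[
\Psi := \bigwedge_{c\in C}\G\bigl((\mathrm{inc}\land\X c)\to\C^0_{x_c}\X^=(\mathrm{dec}\land\X c)\bigr),
\]
which, through the future class operator $\X^=$ that $\Phi_\mathcal{M}$ avoids, makes $\Phi_\mathcal{M}\land\Psi$ a genuine full-\NDLTL formula. Recall that the inner attribute $\hat{x}_c$ is refreshed at every reset of $c$, so positions agreeing on the pair bound by $\C^0_{x_c}$ (the downward closure of $x_c$) lie in a single inter-reset epoch and carry one increment-tag. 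The non-negativeness constraint already present in $\Phi_\mathcal{M}$ forces each decrement to be matched by an \emph{earlier} increment with the same tag, and since tags are fresh per increment this backward matching is injective. The conjunct $\Psi$ imposes the converse: each increment is matched by a \emph{later} decrement with the same tag, which must occur before the next reset, as a reset refreshes $\hat{x}_c$ and thus falls outside the navigation range of $\C^0_{x_c}\X^=$. The two matchings then form a bijection on every epoch, so at each reset the counter value is exactly $0$ and the reset operates as a \emph{zero test}. Consequently $\Phi_\mathcal{M}\land\Psi\land\F q_{\mathit{halt}}$ is satisfiable over finite words iff the Minsky machine derived from $\mathcal{M}$ halts in $q_{\mathit{halt}}$, which is undecidable.

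The delicate point, and the main obstacle, is the correctness of this zero-test simulation. I must verify that the fresh tagging within each $\hat{x}_c$-epoch renders the backward matching of $\Phi_\mathcal{M}$ and the forward matching of $\Psi$ both functional and mutually inverse; concretely, that no increment-tag is consumed by two distinct decrements and that $\Psi$ leaves no increment of an epoch unmatched before its reset. This is exactly where the tree-order semantics of the $\C^0_{x_c}$ quantifier, and the fact that $\mathrm{pos}_{\vec{d}}$ is defined via the vector representation $\simeq$, must be used with care. Once the bijection is secured, all three undecidability claims follow.
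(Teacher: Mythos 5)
Your proposal is correct and follows essentially the same route as the paper: the $\NDLTL^-$ (and infinite-word \NDLTL) claim via $\Phi_\mathcal{M}\land\G\F q$ reducing from repeated control-state reachability of rMCS using \cite{DBLP:conf/stacs/BouajjaniM99}, and the finite-word claim via exactly the conjunct $\bigwedge_{c\in C}\G((\mathrm{inc}\land\X c)\to\C^0_{x_c}\X^=(\mathrm{dec}\land\X c))$ turning resets into zero tests so as to encode Minsky machines. If anything, your closing discussion of why the backward matching from $\Phi_\mathrm{dec}$ and the forward matching from $\Psi$ combine into a per-epoch bijection is more careful than the paper's own one-sentence sketch, which asserts the zero-test effect without elaboration.
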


\section{Deciding Satisfiability of $\NDLTL^\pm$}\label{sec:ndltl-sat}

Having established undecidability and hardness results for \NDLTL we finally turn to decision procedures in this section.
We complete our picture by decidability results for the remaining cases of $\NDLTL^-$ over finite words and $\NDLTL^+$ over finite and infinite words.
The structure follows that of Section \ref{sec:bdltl-sat} and we provide the essential ideas for lifting the constructions as well as additional arguments where needed.
To capture the notion of nesting in \NDLTL we extend data automata and again provide restrictions that carry over from the logic.

\subsection{Nested Data Automata}
\label{ssc:nda}

We extend data automata to read multi-attributed data words by adding a class automaton for each attribute.
The class automata are linearly ordered in the sense that the $i$-th class automaton reads refinements (subwords) of the input of the $(i-1)$-th class automaton.
That way they express a linear order on the attributes which is, however, sufficient since we later show that \NDLTL formulae over a tree order can be translated into formulae over a linear order.
For that reason, we only consider attribute sets $[k]=\lbrace1,…,k\rbrace$ for $k\in\mathbb{N}$.

\begin{definition}[Nested data automaton]
A \emph{$k$-nested data automaton} ($k$-\NDA) is a $(k+1)$-tuple $\mathcal{D} = (\mathcal{A},\mathcal{B}_1,…,\mathcal{B}_k)$ where $(\mathcal{A},\mathcal{B}_i)$ is a data automaton for each $i\in[k]$.
$\mathcal{D}$ is called \emph{locally prefix-closed (\pNDA)} if each $(\mathcal{A},\mathcal{B}_i)$ is a \pDA and it is called \emph{locally suffix-closed (\sNDA)} if each $(\mathcal{A},\mathcal{B}_i)$ is an \sDA.
\end{definition}
Let $\mathcal{D} = (\mathcal{A},\mathcal{B}_1,…,\mathcal{B}_k) $ be a $k$-\NDA with $\mathcal{A}=(Q,\Sigma,\Gamma,\delta_\mathcal{A},Q_0, F_\mathcal{A}, B_\mathcal{A})$ and $\mathcal{B}_i = (S_i, \Gamma, \delta_i, I_i, F_i, B_i)$.
A \emph{configuration} of $\mathcal{D}$ is a tuple $c=(q,f_1,…,f_k)\in Q\times\mathfrak{F}_1\times…\times\mathfrak{F}_k$ where $\mathfrak{F}_i$ is the set of \emph{memory functions} $f: \Delta^{[i]} \to S_i\cup\lbrace\bot\rbrace$ (partially) mapping $i$-tuples of data values to states.

A \emph{run} of $\mathcal{D}$ on an $[k]$-attributed data word $w=(a_0,\vec{d}_0)(a_1,\vec{d}_1)…\in(\Sigma\times\Delta^{[k]})^\infty$ is a maximal sequence $\rho=(q_0,f_{1,0}, …, f_{k,0})(q_1,f_{1,1}, …, f_{k,1})…$ of configurations where $q_0\in Q_0$, $f_{i,0}(\Delta^{[i]})=\lbrace\bot\rbrace$ and for each consecutive positions $n,n+1$ on $\rho$ there is a transition $(q_n,a_n,g,q_{n+1})\in\delta_\mathcal{A}$ for $g\in\Gamma$ of the base automaton and a transition $(s_i,g,s_i')\in\delta_i$ for each class automaton $\mathcal{B}_i$ such that
\begin{inparaenum}[(1)]
  \item $f_{i,n+1}(\vec{d}_n|_{[i]}) = s_i'$ and 
  \item either $f_{i,n}(\vec{d}_n|_{[i]})= s_i$, or $f_{i,n}(\vec{d}_n|_{[i]}) = \bot$ and $s_i\in I_i$, and 
  \item $\forall_{\vec{d}'\in\Delta^{[i]},\vec{d}'\neq \vec{d}_n|_{[i]}}: f_{i,n}(\vec{d}') = f_{i,n+1}(\vec{d}')$.
\end{inparaenum}

A run of $\mathcal{D}$ on $w$ is (finitely) \emph{accepting} if it ends in a configuration $(q,f_1,…f_k)$ with $q\in F_\mathcal{A}$ and $\forall_{i\in[k]} f_i(\Delta^{[i]}) \subseteq F_i\cup\lbrace\bot\rbrace$.
Moreover, it is accepting if there are infinitely many configurations $(q,f_{1,n},…,f_{k,n})$ on $\rho$ such that $q\in B_\mathcal{A}$ is Büchi-accepting and for each level $i\in[k]$ and each data valuation $\vec{d}\in\Delta^{[i]}$ there is either 
\begin{inparaenum}[(I)]
  \item \emph{no} position $m$ with $\vec{d}_m|_{[i]} = \vec{d}$, or
  \item a \emph{last} position $m$ with $\vec{d}_m|_{[i]} = \vec{d}$ 
        and the state $f_{i,m+1}(\vec{d})\in F$ is final, or
  \item there are \emph{infinitely many} positions $m$ where 
        $\vec{d}_m|_{[i]} = \vec{d}$ and $f_{i,m+1}(\vec{d})\in B_i$ is 
        Büchi-accepting.
\end{inparaenum}

The idea of deciding emptiness of \pNDA and \sNDA is, again, to translate them into multi-counter systems, which this time will be nested.
Similar notions of such nested systems can be found in~\cite{DBLP:conf/ershov/LomazovaS99,DBLP:conf/mfcs/BjorklundB07}.

\begin{definition}[$k$-nMCS]
A \emph{$k$-nested multi-counter system ($k$-nMCS)} is a tuple $\mathcal{M} = (Q,\delta,I)$ with a finite set of states $Q$, a set of \emph{initial states} $I \subseteq Q$, and a \emph{transition relation} $\delta \subseteq (\bigcup_{i\in[k]} Q^i) \times Q^k$.
\end{definition}

A \emph{multiset} over a set $S$ is a mapping $m\in\mathbb{N}^S$.
For a $k$-nMCS $\mathcal{M} = (Q,\delta,I)$, the set of \emph{configurations of level $i$} are defined inductively (from $k$ to $0$) as $C_k = Q$ and $C_{i-1} = Q \times \mathbb{N}^{C_i}$.
The set of \emph{configurations} of $\mathcal{M}$ is then $C_\mathcal{M} = C_0$.
We can see an element of $C_0$ as a term constructed over unary function symbols $Q$, constants $Q$ and the binary operator $+$.
The terms are considered modulo associativity and commutativity of the $+$ operator which does not appear on the top level.
For example $q_0(q_1(q_3(q_5+q_5+q_6)+q_3(q_6+q_6))+q_1(q_3(q_6+q_6)+q_3(q_6+q_5+q_5))+q_2(q_7(q_8))+q_2(q_7(q_8)))$ corresponds to 
$(q_0, \{(q_1,
    \{(q_3,\{q_5:2,q_6:1\}):2,(q_3,\{q_6:2\}):2\}):1,
(q_2,\{(q_7,\{q_8:1\}):1\}):2\})$.

Now, the transition relation $\to \subseteq C_\mathcal{M} \times C_\mathcal{M}$ on configurations can be easily defined as a rewrite rule. 
For $((q_0,q_1,…,q_i), (q_0',q_1',q_2',…,q_k')) \in \delta$, we have 
$(q_0,X_1+q_1(X_2+… q_i(X_{i+1})…)) \to (q'_0,X_1+q'_1(X_2+… q'_i(X_{i+1}+q'_{i+1}(q'_{i+2}… q'_{k-1}(q'_{k})))))$ where $X_i \in\mathbb{N}^{C_i}$.
As usual we denote by $\to^*$ the reflexive and transitive closure of $\to$.

A \emph{well-quasi-ordering (WQO)} on a set $C$ is a pre-order $\preceq$
such that, for any infinite sequence $c_0,c_1,c_2,…$
there are $i,j$ with $i < j$ and $c_i \preceq c_j$.
A WQO $\preceq$ on a set $C$ induces a WQO $\preceq_m$ on multisets over C as follows.
Let $B = \{b_1,…,b_n\}$ and $B'= \{b'_1,…,b'_{n'}\}$ two multisets over $C$. 
Then, $B \preceq_m B'$ iff there is an injection $h$ from  $[n]$ to $[n']$ with $b_i \preceq b'_{h(i)}$.
Let $\preceq_k$ be the WQO $=$ (equality relation) on the set of states $Q$ of the $k$-nMCS $\mathcal{M}$. 
We iterate the construction and obtain a WQO $\preceq_1$ on $C_\mathcal{M}$.

It can be easily seen that the transition relation $\to$ of $k$-nMCS is \emph{monotonic} wrt.\ $\preceq_1$, i.\,e., if $c_1 \preceq_1 c_2$ and $c_1 \to c_3$ then $c_2 \to c_4$ for some $c_4$ with $c_2 \preceq_1 c_4$.
A $k$-nMCS is hence a \emph{well-structured transition system} \cite{DBLP:journals/bsl/Abdulla10} and we directly obtain the following lemma.

\begin{lemma}[Coverability] \label{lem:coverability}
Let $\mathcal{M} =(Q,\delta,I)$ be a a $k$-nMCS, $c \in C_\mathcal{M}$ a configuration and $q\in Q$ a state.
The \emph{coverability problem} of checking if there is a configuration $c'\in C_\mathcal{M}$ with $c \preceq c'$ such that $(q,\emptyset) \to^* c'$, is decidable.
\end{lemma}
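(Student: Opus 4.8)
The plan is to invoke the general theory of well-structured transition systems (WSTS). The lemma statement has already done most of the conceptual work: it establishes that a $k$-nMCS is a WSTS, i.e.\ that the transition relation $\to$ is monotonic with respect to the well-quasi-ordering $\preceq_1$ on $C_\mathcal{M}$. Coverability is decidable for any WSTS that additionally satisfies two mild effectiveness hypotheses (cf.\ the survey \cite{DBLP:journals/tcs/FinkelS01} and \cite{DBLP:journals/bsl/Abdulla10}): the ordering $\preceq_1$ must be decidable, and one must be able to compute (a finite representation of) the minimal predecessors, or equivalently the ``basis'' of the upward-closed set of configurations that can cover a given target. So the whole proof reduces to checking that these two hypotheses hold for $k$-nMCS and then citing the generic backward-reachability algorithm.

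**First I would** verify that $\preceq_1$ is decidable. This follows by induction on the nesting level, mirroring the inductive definition of $\preceq_i$ from the excerpt: $\preceq_k$ is plain equality on the finite set $Q$ and is trivially decidable; and if $\preceq_{i+1}$ on $C_{i+1}$ is decidable, then the induced multiset ordering $\preceq_m$ on $\mathbb{N}^{C_{i+1}}$ is decidable, since checking for the required injection $h$ between two \emph{finite} multisets is a finite bipartite-matching problem. Pairing this with equality on the first component $Q$ gives decidability of $\preceq_i$ on $C_i = Q\times\mathbb{N}^{C_{i+1}}$, closing the induction down to $\preceq_1$ on $C_\mathcal{M}=C_0$. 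Note every configuration is a finite term, so all these tests are effective.

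**The main obstacle** is the second hypothesis: computing, for any configuration $c'$, a finite basis of $\mathrm{Pre}_\preceq(\{c'\}^\uparrow)$, the set of minimal configurations from which one $\to$-step reaches some configuration covering $c'$. Here I would argue directly from the rewrite-rule form of $\to$. A single transition arising from $((q_0,\dots,q_i),(q_0',\dots,q_k'))\in\delta$ rewrites a matching nested context, consuming the tuple $(q_0,q_1,\dots,q_i)$ along one branch and producing the fully specified nested subtree $(q_0',\dots,q_k')$ together with the deeper spawned structure. Because each rule touches only a bounded-depth ``spine'' of the term and the produced subtree is completely determined by the rule, inverting a rule against a target $c'$ amounts to: guessing which rule fired, removing (a matching copy of) the produced material from $c'$ where present, and reinserting the consumed spine symbols. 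This yields finitely many minimal candidate predecessors per rule, and since $\delta$ is finite the union over all rules is a finite, computable basis. The one point demanding care is the interaction of covering with the multiset semantics at each level—the produced subtree need only be \emph{covered}, not matched exactly—so the predecessor computation must allow the target to carry extra, unconsumed siblings; these are simply retained, keeping the candidate minimal.

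**Finally,** with $\preceq_1$ decidable and effective predecessor bases in hand, I would apply the standard backward-reachability procedure: starting from the upward closure $\{c\}^\uparrow$ of the target, iterate $\mathrm{Pre}_\preceq$ and accumulate bases, which stabilises after finitely many steps precisely because $\preceq_1$ is a WQO (every $\preceq_1$-antichain is finite, so the increasing sequence of upward-closed sets cannot grow forever). The algorithm then checks whether the initial configuration $(q,\emptyset)$ lies in the computed upward-closed set, deciding exactly whether some $c'\succeq c$ is reachable via $(q,\emptyset)\to^* c'$. I expect the verification of the effective-pred basis to be the only nontrivial step; the WQO-based termination and the decidability of $\preceq_1$ are routine once the WSTS framework from \cite{DBLP:journals/bsl/Abdulla10} is in place.
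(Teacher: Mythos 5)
Your proposal is correct and follows essentially the same route as the paper: the paper establishes that $\preceq_1$ is a WQO and that $\to$ is monotonic, concludes that a $k$-nMCS is a well-structured transition system, and obtains the lemma directly from the generic WSTS coverability result of \cite{DBLP:journals/tcs/FinkelS01,DBLP:journals/bsl/Abdulla10}. You additionally spell out the effectiveness hypotheses (decidability of $\preceq_1$ and computability of a finite $\mathrm{Pre}$-basis by inverting the rewrite rules), which the paper leaves implicit in its citation of the general theory.
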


Given a $k$-\NDA $\mathcal{D} = (\mathcal{A},\mathcal{B}_1,…,\mathcal{B}_k)$ where $\mathcal{A} = (Q_0,\Sigma,\Gamma,\delta_0,I_{0},F_{0}, \emptyset)$ and $\mathcal{B}_i = (Q_i,\Gamma,\delta_i,I_{i},F_{i}, \emptyset)$ for $i \in [k]$ with disjoint sets of states and no Büchi-accepting states, we can construct a $k$-nMCS $\mathcal{M}_\mathcal{D} = (\bigcup_{i=0}^k Q_i,\delta,I_0)$ as follows.
Let $((q_0,…,q_i), (q_0',…,q'_k)) \in \delta$ for some $0 \le i \le k$
if there are letters $a\in\Sigma$ and $g\in\Gamma$ such that there is a transition of the base automaton $(q_0,a,g, q'_0) \in \delta_0$ and for all $1 \le j \le i$ we have transitions$(q_j,g,q'_j) \in \delta_j$ of the class automata and for all $j$ with $i < j \le k$ there exists an initial state $q''_j \in I_{j}$ such that $(q''_j,g,q'_j) \in \delta_j$.
Then $\mathcal{D}$ is empty iff a configuration can be reached in $\mathcal{M}_\mathcal{D}$ containing only states from $F := \bigcup_{i=0}^k F_{i}$

In case $\mathcal{D}$ is a \pNDA, all states of the class automata are final and the emptiness problem hence reduces to the coverability problem of $k$-nMCS.
As above, if $\mathcal{D}$ is an \sNDA, we considering the reversal of the base and the class automata to obtain the case of \pNDA (still without Büchi-accepting states).
In the rest of this section we address the remaining case of checking if an \sNDA \emph{with} Büchi-accepting states accepts an infinite data word in order to obtain the following.

\begin{theorem}\label{thm:emptiness-nda}
  Emptiness of \sNDA is decidable over finite and infinite data words.
  Emptiness of \pNDA is decidable over finite data words.
\end{theorem}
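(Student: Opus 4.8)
The plan is to follow the two-layer structure already set up for data automata, lifting each step to the nested setting: I first dispatch the cases that reduce directly to coverability of $k$-nMCS, and then concentrate on the genuinely new case, \sNDA over infinite words, by generalising properties (A1)--(A3) and Lemmas~\ref{lem:sda-properties} and~\ref{lem:sda-construction}. For a \pNDA over finite words all class-automaton states are final, so the $k$-nMCS $\mathcal{M}_\mathcal{D}$ constructed above accepts a finite word iff a configuration consisting only of states in $F=\bigcup_i F_i$ can be covered; by monotonicity this is precisely the coverability problem, decidable by Lemma~\ref{lem:coverability}. For an \sNDA over finite words I would reverse the base automaton and every class automaton (swapping initial and final states and reversing transitions); the reversal of each \sDA component is a \pDA, so the reversed object is a \pNDA accepting the reversed language, whose emptiness coincides with that of the original. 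This leaves only the case of \sNDA over infinite words.

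For that case I would guess a base state $q\in Q$ together with a set $T_i\subseteq\delta_i$ for each level $i\in[k]$, meant to capture exactly the class-automaton transitions taken infinitely often, and formulate: (A1$_i$) from the target of every transition in $T_i$, $\mathcal{B}_i$ can reach a state in $F_i\cup B_i$ using only transitions from $T_i$; (A2) there is a single infinite output word of $\mathcal{A}$ startable in $q$ along which, for each level $i$, a run of $\mathcal{B}_i$ using only $T_i$ exists in which every transition of $T_i$ recurs infinitely often; (A3) a configuration $(q,f_1,\dots,f_k)$ is reachable in which, for every level $i$ and every tuple $\vec{d}$, the instance $f_i(\vec{d})$ is either $\bot$, lies in $F_i$, or admits a transition in $T_i$. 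The analogue of Lemma~\ref{lem:sda-properties} then reads: $\mathcal{D}$ accepts some infinite data word iff such $q$ and $(T_i)_i$ exist. The forward direction reads off each $T_i$ and $q$ from the transitions used infinitely often on an accepting run; the backward direction drives the system to the configuration of (A3) and then pumps the output of (A2), using suffix-closedness (all states initial) to spawn instances as needed and (A1$_i$) together with recurrence to route every surviving instance to acceptance at each level.

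Deciding the properties then mirrors Lemma~\ref{lem:sda-construction}. Each (A1$_i$) is a reachability question in the finite $T_i$-restricted transition graph of $\mathcal{B}_i$. Property (A3) is again a finite-word emptiness check: I would build a Büchi-free \sNDA whose base automaton has sole final state $q$ and whose level-$i$ class automaton treats $F_i\cup{^\bullet}T_i$ as final, where ${^\bullet}T_i$ is the set of sources of $T_i$-transitions, and decide it by coverability of its associated $k$-nMCS as above. I would decide (A2) by a Büchi automaton over $\Gamma$ obtained as a product of $\mathcal{A}$, reading its own outputs, with one representative copy of each $\mathcal{B}_i$ restricted to $T_i$, deleting output letters labelling no chosen transition and imposing a generalised Büchi condition forcing every transition of every $T_i$ to recur; this automaton is at most exponential in $\mathcal{D}$ and its nonemptiness is decidable.

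The main obstacle is (A2) together with the backward direction of the lemma. Unlike the single-level case, one output word must be consumed simultaneously by nested instances at all $k$ levels, and the refinement structure — a fresh level-$i$ value fixes all its level-$j$ descendants for $j>i$ — constrains how positions of the infinite suffix may be assigned to tuples when assembling a genuine witness run. The delicate point to verify is that the shared recurring output can actually be realised on a real nested data word, routing the finitely many pre-existing instances at each level to acceptance while feeding the recurrent nested chain; this is exactly where suffix-closedness and the nesting of the $T_i$ interact, and I expect it to be the technically hardest step.
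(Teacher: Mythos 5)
Your treatment of the finite-word cases is exactly the paper's: \pNDA emptiness over finite words reduces to coverability of the associated $k$-nMCS (Lemma~\ref{lem:coverability}), and \sNDA over finite words is handled by reversal into a \pNDA. Your skeleton for the infinite case -- guess the split point, characterize it by reachability plus recurrence conditions, decide them by auxiliary emptiness checks -- also matches the paper's plan. But the core of that case, which you yourself flag as ``the technically hardest step,'' is genuinely missing, and your per-level formulation would not survive an attempt to close it. Guessing independent sets $T_i\subseteq\delta_i$ with level-wise conditions (A1$_i$)/(A3) is not a correct characterization, because an instance of $\mathcal{B}_i$ for a tuple $\vec{d}$ can only consume letters at positions whose first $i$ data values equal $\vec{d}$; which letters those are is determined by how the scheduler assigns positions to \emph{level-1} values. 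Concretely: a residual level-2 instance depending on a level-1 value $d_1$ may have an applicable $T_2$-transition on some letter $b$, satisfying your (A3), while the level-1 instance for $d_1$ sits in a state whose only $T$-continuations to acceptance emit $a$'s; the $b$'s recurring in the global output (your (A2)) then necessarily belong to positions assigned to \emph{other} level-1 values, the dependent instance starves, and the run rejects even though all your conditions hold. Your (A2) has the dual problem on the necessity side: on an accepting run no single instance of $\mathcal{B}_i$ ($i>1$) reads the whole output, so demanding one run of each $\mathcal{B}_i$ over the entire word using only $T_i$ is not implied by nonemptiness either.

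What the paper does instead is guess $T\subseteq\delta_1$ \emph{only at level 1}, and make the cross-level consistency explicit: the residual obligations of each active level-1 instance together with all instances depending on it form a finite state-labeled tree, and the characterization (Lemma~\ref{lem:snda-properties}) quantifies over a finite set $M$ of such trees. Conditions (B1) and (B4) then require the label word of every accepting $T$-continuation of $\mathcal{B}_1$ to decompose as a \emph{nested shuffle} -- of the continuations of the existing dependent instances (the tree labeling $\gamma$ in (B4)) together with fresh full words from $\shuffle(\mathcal{L}(\mathcal{B}_2)\cap\shuffle(\cdots\cap\shuffle\mathcal{L}(\mathcal{B}_k)\cdots))$ -- which is precisely the constraint your counterexample violates. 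Deciding these conditions is also harder than your sketch suggests: the candidate sets $M$ are extracted via the backward saturation algorithm for well-structured transition systems (computing minimal post-configurations at $q$, of which there are finitely many, and larger ones cannot help), and the shuffle conditions (B1), (B4) are checked by building $(k-1)$-\sNDA from the components of $\mathcal{D}$ and deciding their emptiness by induction on $k$. Without the trees $M$, the shuffle languages, and this induction, the infinite-word case of the theorem does not go through.
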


Now, let $\mathcal{D} = (\mathcal{A},\mathcal{B}_1,…,\mathcal{B}_k)$ be a $k$-\sNDA where $\mathcal{A} = (Q,\Sigma,\Gamma,\delta_\mathcal{A},I_\mathcal{A},\emptyset, B_\mathcal{A})$ and $\mathcal{B}_i = (S_i,\Gamma,\delta_i,S_i,F_i, B_i)$.
For a configuration $c = (q,f_1,…,f_k)$ of $\mathcal{D}$, a data valuation $\vec{d}\in\Delta^{[1]}$ with $f_1(\vec{d}) \neq \bot$ corresponds to an “active” instance of the class automaton $\mathcal{B}_1$.
Consider the set $m := \lbrace\vec{d}'\in\Delta^{[i]}\mid i\in[k], f_i(\vec{d}') \neq \bot, \vec{d}'(1) = \vec{d}(1)\rbrace$ of data valuations \emph{depending} on $\vec{d}$.
It is prefix-closed wrt.\ the linear order on $[k]$ and can hence be considered as a \emph{tree} with root $\vec{d}$ (level~1).
Define a labeling $s: m \to \bigcup_{i\in[k]}S_i$ attaching to each node $\vec{d}'\in\Delta^{[i]}$ (level~$i$) in $m$ the current state of the corresponding class automaton instance, i.\,e., $s(\vec{d}') := f_i(\vec{d}')$, and repeatedly delete all leaf nodes of $m$ that are final states.
Let $M_c$ be the (finite) set of all such labeled trees $(m,s)$ for a configuration $c$.

As done similar in Section \ref{sec:bdltl-sat}, we characterize a configuration that splits the finite from the infinite behaviour on an accepting run of $\mathcal{D}$.
For a set of transitions $T\subseteq\delta_1$ of $\mathcal{B}_1$, a state $q\in Q$ of $\mathcal{A}$ and a finite set $M$ of finite trees labeled by states from $S_1\cup…\cup S_k$, consider the following properties.
\begin{enumerate}[(B1)]
  \item For all $t_1\in T$ there is a sequence $t_1t_2…\in T^\infty$, $t_i=(s_i,g_i,s_i')$, inducing an accepting run of $\mathcal{B}_1$ and $g_1g_2…\in 
             \shuffle(\mathcal{L}(\mathcal{B}_2)\cap\shuffle(… \cap\shuffle(\mathcal{L}(\mathcal{B}_{k-1})\cap\shuffle\mathcal{L}(\mathcal{B}_k)) … ))$.
  \item There is a sequence $t_1t_2…\in T^\omega$ with $t_i=(s_i,g_i,s_i')$ in which 
        each $t\in T$ occurs infinitely often and $g_1g_2…\in\Gamma^\omega$ is an output of 
        $\mathcal{A}$ starting in $q$.             
  \item There is a reachable configuration $c = (q, f_1,…,f_k)$ with $M = M_c$ 
        such that for all $i\in[k]$ and all $\vec{d}\in\Delta^{[i]}$ either
        \begin{inparaenum}[(i)]
          \item $f_i(\vec{d}) = \bot$ (there is no corresponding instance), or
          \item $\forall_{\vec{d}'\in\Delta^{[k]} \text{ s.\,t. }\vec{d}'|_{[i]}=\vec{d}}
                 \forall_{j\ge i}: f_j(\vec{d}'|_{[j]})\in F_j$ (the corresponding 
                 instance and all instances depending on it are in a final state),
                or
          \item $\exists_{g\in\Gamma,s'\in S}: (f_1(\vec{d}|_{[1]},g,s')\in T$ (there is a transition applicable to
                the corresponding instance of $\mathcal{B}_1$).
        \end{inparaenum}
  \item For each tree $(m,s)\in M$ there is a second labeling $\gamma: m \to \Gamma^\infty$ such that,  
          for the root $r\in m$, the label $\gamma(r)$ is accepted by $\mathcal{B}_1$ restricted 
          to $T$ when starting in state $s(r)\in S_1$ and for all nodes $v\in m$ on a
          level $i>1$
          \begin{inparaenum}[(i)]
            \item $\gamma(v)$ is accepted by $\mathcal{B}_i$ starting in state $s(v)\in S_i$ and
            \item $\gamma(v)$ must be a shuffle of the labels of the direct children 
                  of $v$ and a (possibly infinite) number of words from the 
                  shuffle set 
                  $\shuffle(\mathcal{L}(\mathcal{B}_{i+1})… \cap\shuffle(\mathcal{L}(\mathcal{B}_{k-1})\cap\shuffle\mathcal{L}(\mathcal{B}_k)) … )$.
            \end{inparaenum}
\end{enumerate}
\begin{lemma}\label{lem:snda-properties}
  The \sNDA $\mathcal{D}$ accepts an infinite data word iff there are $T\subseteq\delta_1$, $q\in Q$ and a set $M$ of finite trees labeled by states from $S_1\cup…\cup S_k$ s.t. properties (B1)--(B4) hold.
\end{lemma}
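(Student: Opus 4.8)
The plan is to mirror the structure of Lemma~\ref{lem:sda-properties}, proving both directions by relating an accepting run of the \sNDA $\mathcal{D}$ to the combinatorial data $(T,q,M)$, but now accounting for the additional nesting introduced by the class automata $\mathcal{B}_2,\dots,\mathcal{B}_k$. The key idea is to focus on the \emph{topmost} class automaton $\mathcal{B}_1$: exactly as before, $T$ will be the set of transitions of $\mathcal{B}_1$ taken infinitely often on $\rho$, $q$ the state of the base automaton $\mathcal{A}$ at a splitting position $i$ after which only transitions from $T$ occur, and $M = M_c$ the set of labeled trees recording the current states of all active class-automaton instances at the configuration $c$ reached at position $i$. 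The role of the shuffle sets in (B1) and (B4) is to capture that once a value is fixed at level~$1$, the inputs fed to $\mathcal{B}_1$ must decompose into subwords handled by the nested instances of $\mathcal{B}_2,\dots,\mathcal{B}_k$ --- each nested instance reads a refinement of its parent's input, and the parent's class string is an (infinite) shuffle of those refinements.

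For the ($\Rightarrow$) direction I would take an accepting run $\rho$, set $T$ to be the transitions of $\mathcal{B}_1$ used infinitely often, pick a splitting position $i$, and read off $q$ and $M = M_c$ from the configuration at $i$. Then (B2) is witnessed directly by the suffix of $\rho$ from $i$ projected to $\mathcal{B}_1$'s transitions and $\mathcal{A}$'s outputs; (B3) holds because every instance (at every level) depending on a value not active past $i$ must already have reached acceptance by clause (II) of the \sNDA acceptance condition, while values still active correspond to applicable $T$-transitions; (B1) and (B4) hold because the infinite projection of $\rho$ to each surviving $\mathcal{B}_1$-instance, together with its nested descendants, yields precisely the shuffle decomposition demanded, using that each $\mathcal{B}_i$-instance accepts. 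The deletion of final leaves in the construction of $M_c$ matters here: it discards exactly the nested subtrees that have already accepted and need not be continued.

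For ($\Leftarrow$), given $(T,q,M)$ satisfying (B1)--(B4), I would assemble an accepting run. Property (B3) gives a finite prefix reaching a configuration $c$ with $M_c = M$. From $c$ onward, (B2) supplies an infinite $T$-sequence $\tau$ whose $\mathcal{A}$-output the base automaton can follow; since $\mathcal{D}$ is an \sNDA, \emph{all} states of every $\mathcal{B}_i$ are initial, so new nested instances can be spawned in whatever state is needed at any point. Property (B1), combined with each transition of $T$ occurring infinitely often in $\tau$, guarantees that every surviving level-$1$ instance can select a sub-sequence of $\tau$ along which it reaches acceptance, and the shuffle structure lets us route the remaining letters into freshly spawned nested instances that each accept by the innermost shuffle sets. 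Property (B4) provides the concrete labelings $\gamma$ certifying that the instances already present in $M$ can be completed to accepting computations consistently with the nesting.

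The main obstacle is the interaction between the shuffle semantics and the Büchi acceptance condition across all $k$ levels simultaneously: one must verify that the chosen infinite sequence $\tau$ can be \emph{simultaneously} decomposed so that (a) $\mathcal{B}_1$ accepts along $T$, (b) each active nested instance receives a subword it accepts, and (c) every data valuation that recurs infinitely often hits a Büchi-accepting state infinitely often at its own level, as required by clause (III) of \sNDA acceptance. Making this routing precise --- in particular, scheduling infinitely many freshly spawned instances at each nested level so that none is ``starved'' and each satisfies its own acceptance condition --- is the delicate combinatorial heart of the argument. I expect this to require a careful interleaving/diagonalization over levels, exploiting that each $t \in T$ recurs infinitely often and that the shuffle sets are closed under interleaving; the finiteness of $M$ and of the state sets $S_i$ keeps the bookkeeping bounded.
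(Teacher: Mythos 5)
Your proposal follows the paper's own proof essentially step for step: the same choice of $T$ (the $\mathcal{B}_1$-transitions taken infinitely often on the accepting run), the same splitting configuration yielding $q$ and $M = M_c$, and the same sufficiency argument --- reach $c$ via (B3), follow the (B2)-sequence with the base automaton, discharge the obligations in $M$ via (B4) and fresh instances via (B1), exploiting suffix-closedness to spawn class-automaton instances in arbitrary states. The ``delicate routing'' you flag at the end is resolved in the paper exactly as you anticipate: each obligation sequence $\tau_{(m,s)}\in T^\infty$ is dispatched by repeatedly selecting the next occurrence of its required transition in the global sequence $\tau$, which always exists because every $t\in T$ recurs infinitely often by (B2).
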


For a complete proof see Appendix \ref{spx:nda}. It is based on similar arguments as Lemma~\ref{lem:sda-properties}.
The new aspect is to schedule class automaton instances on higher levels consistently.

\begin{lemma}\label{lem:construction}
For $T\subseteq\in\delta_1$ and $q\in Q$ we can decide if there is a set $M$ of finite trees labeled by states from $S_1\cup…\cup S_k$ such that the properties (B1)--(B4) hold.
\end{lemma}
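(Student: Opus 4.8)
The plan is to decide each of the four properties (B1)--(B4) separately and then combine them, observing that (B1) and (B2) depend only on $T$ and $q$ (not on $M$), so they can be checked independently, while the existence of a suitable $M$ for (B3) and (B4) is the crux. First I would dispose of (B1) and (B2) exactly as in Lemma~\ref{lem:sda-construction}. For (B2) I build a B\"uchi automaton over $\Gamma$ by taking $\mathcal{A}$ with outputs as inputs, deleting transitions whose label occurs on no transition in $T$, and intersecting with $\G\F g$ for each $(s,g,s')\in T$; nonemptiness decides (B2). For (B1) the shuffle-closure condition on the output language is the genuinely new ingredient, but membership of $g_1g_2\ldots$ in the nested shuffle set $\shuffle(\mathcal{L}(\mathcal{B}_2)\cap\ldots)$ can be recognised by a (generalised) B\"uchi automaton built bottom-up over the levels: each $\shuffle$ corresponds to spawning an unbounded number of parallel instances of the level-$(i{+}1)$ automaton, which over \emph{infinite} words is still $\omega$-regular since only finitely many instances can contribute infinitely often. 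Intersecting this with the run condition of $\mathcal{B}_1$ restricted to $T$ and the B\"uchi condition that every $t\in T$ recurs yields an at most exponentially larger B\"uchi automaton whose nonemptiness decides (B1).

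Next I would reduce the joint question ``does there exist $M$ satisfying (B3) and (B4)?'' to a finite search. The key observation is that $M$ ranges over finite \emph{sets} of finite labeled trees, and the trees arising as elements of some $M_c$ have bounded branching labels (each node carries a state from the finite set $S_1\cup\ldots\cup S_k$) and, after the leaf-pruning of final states, only finitely many distinct shapes are relevant up to the information (B4) needs. Property (B4) is a purely local, level-by-level acceptance condition on a single tree $(m,s)$: for each node I must certify that some $\gamma(v)$ exists that is accepted by $\mathcal{B}_i$ from $s(v)$ and is a shuffle of its children's labels together with arbitrarily many words from the lower-level shuffle set. Since the set of ``schedulable'' labels from a given state is $\omega$-regular (by the same bottom-up shuffle construction as for (B1)), whether a node \emph{admits} a valid $\gamma(v)$ given its children is decidable, so (B4) for a fixed tree reduces to a bottom-up marking of which (state, multiset-of-child-marks) pairs are realisable. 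This makes the set of trees satisfying (B4) effectively characterisable.

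The binding constraint is (B3), which asserts that the chosen $M$ is exactly $M_c$ for some \emph{reachable} configuration $c$ of the \sNDA, with the side conditions (i)--(iii) on every instance. Here I would reuse the reachability machinery from the finite-word case: reachability of a configuration with a prescribed multiset-of-trees abstraction is expressible as a coverability-type query in the associated nested multi-counter system $\mathcal{M}_\mathcal{D}$, which is decidable by Lemma~\ref{lem:coverability}. Concretely, I would build an \sNDA (analogously to $\hat{\mathcal{D}}$ in Lemma~\ref{lem:sda-construction}) whose finite acceptance encodes ``reach a configuration whose pruned labeled-tree profile is $M$ and every surviving instance either is final or can still fire a transition of $T$ at level $1$'', and decide its finite-word emptiness via the $k$-nMCS coverability reduction. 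The main obstacle, and the step I expect to absorb most of the technical work, is showing that it suffices to quantify over finitely many candidate sets $M$: I must argue that the relevant abstraction of a configuration (the finite profile of pruned trees together with the (B3) side conditions) takes only boundedly many values, so that the outer search over $M$ is finite and each resulting instance feeds into decidable (B1), (B2), (B4) checks and a decidable (B3) coverability query. Once finiteness of the $M$-space is established, the decision procedure is the nondeterministic guess of $q$, $T$, and $M$ followed by verification of (B1)--(B4), each in the decidable fragment above.
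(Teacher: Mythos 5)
Your decomposition starts out along the paper's lines: (B2) is indeed checked by the same B\"uchi-automaton construction as in Lemma~\ref{lem:sda-construction}, and your idea of encoding (B3) as a coverability query over the associated nested multi-counter system via a modified \sNDA matches the paper's construction of $\tilde{\mathcal{D}}$ (there, the base automaton guesses and flags the last step of each level-$1$ instance into a source state of $T$, and the class automata verify the guess). The central gap is your claim that the nested shuffle sets $\shuffle(\mathcal{L}(\mathcal{B}_2)\cap\shuffle(\cdots))$ are $\omega$-regular because ``only finitely many instances can contribute infinitely often.'' This is false: in an infinite shuffle the number of instances that are simultaneously started-but-unfinished is unbounded, and no finite automaton can track it. Already over finite words the shuffles of arbitrarily many copies of $ab$ form a non-regular language (equal numbers of $a$'s and $b$'s, every prefix having at least as many $a$'s as $b$'s), so no B\"uchi automaton recognizes these sets; this non-regularity is exactly why nested data automata and nested MCS are needed at all, and a regular collapse would be irreconcilable with the \textsc{Ackermann}-hardness shown in the paper. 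The paper instead checks the shuffle requirements of (B1) and (B4) by constructing, for each $t\in T$ and each tree $(m,s)\in M$ respectively, a $(k{-}1)$-\sNDA from the components of $\mathcal{D}$ and deciding its emptiness \emph{by induction on the nesting depth}, bottoming out in coverability of nested multi-counter systems (Lemma~\ref{lem:coverability}). Your bottom-up marking procedure for (B4) inherits the same flaw, since it rests on the alleged regularity of the ``schedulable-label'' sets.

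The second gap is the one you flagged yourself and did not close: finiteness of the search over $M$. Your proposed resolution --- that the pruned-tree profile of a configuration ``takes only boundedly many values'' --- cannot work, because the trees in $M_c$ have depth at most $k$ but \emph{unbounded branching} (a node may have arbitrarily many children, and pruning final-state leaves does not bound this), so there are infinitely many candidate sets $M$. The paper resolves this differently: it applies the standard backward saturation algorithm for well-structured transition systems to the nMCS of $\tilde{\mathcal{D}}$ (read as forward propagation on the reversed automaton), which by the well-quasi-ordering terminates and yields a \emph{finite basis} of minimal configurations $(q,f_1,\ldots,f_k)$ reachable under the (B3) side conditions; the finitely many minimal sets $M$ read off from this basis are the only candidates that must be tested against (B4), by the monotonicity argument that if no minimal candidate satisfies (B4) then no larger one does. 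So the finiteness you need is not a bounded abstraction but the finite-basis property of the WQO together with this monotonicity observation, both of which your proposal would still have to supply.
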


Given $T$ we verify (B2) as above by constructing and analyzing a Büchi automaton.
We now sketch the procedure to compute the candidates $M$ that satisfy (B3).
We construct a $k$-\sNDA $\tilde{\mathcal{D}} = (\tilde{\mathcal{A}}, \tilde{\mathcal{B}}_1,…,\tilde{\mathcal{B}}_k)$, without Büchi-accepting states, from $\mathcal{D}$ by taking $q$ as only final state in $\tilde{\mathcal{A}}$.
In each step, $\tilde{\mathcal{A}}$ guesses whether the currently active instance of $\tilde{\mathcal{B}_1}$ performs its last step entering a source state $s$ of some transition $(s,g,s')\in T$.
In that case it marks the current output by some flag.
$\tilde{B}_1$ simulates $\mathcal{B}_1$ and verifies that $\tilde{\mathcal{A}}$ guessed correctly.
Each other class automaton $\tilde{\mathcal{B}}_i$ ($i>1$) simulates $\mathcal{B}_i$. 
Upon reading the flag it moves to an accepting copy of the state they would have moved to otherwise.

The configurations in which $\tilde{\mathcal{D}}$ can accept are exactly those configurations reachable by $\mathcal{D}$ that satisfy (B3).
We apply the standard saturation algorithm for well-structured transition systems where constraints are propagated from the target control state backwards along the edges of the nMCS.
After its termination, the algorithm computed the minimal preconditions for reaching a the target state.
On a reversed \sNDA, this can be understood as a forward propagation computing minimal post-conditions. 
In this case the target state is $q$ and the minimal post conditions characterize the minimal configurations $(q,f_1,…,f_k)$ that can be reached.
Here, minimal means with the smallest number of instances of some class automaton.
The post-conditions hence give us all minimal sets $M$ when reaching $q$.
These are the (finitely many) candidates for (B4) since if none of those satisfies the properties any larger one will not either.

Now, for testing the candidates $M$ to comply (B4) and $T$ to satisfy (B1), the essential idea is to let the shuffle requirements be checked by a ($k-1$)-\sNDA built by modifying the components of $\mathcal{D}$.
Such an automaton is constructed for each $(m,s)\in M$ and each $t\in T$, respectively, and can, by induction, be checked for emptiness.

\subsection{From \NDLTL to \NDA}
\label{ssc:ndltl2nda}
The translation from $\NDLTL^\pm$ to \sNDA and \pNDA, respectively, follows closely the one for \BDLTL in Section \ref{ssc:bdltl2da}.
For an \NDLTL formula over arbitrarily ordered attributes, a word has at every position a tree of attributes with $f$ maximal paths of length of at most $k$.
The first step is to translate this formula to a formula over the linearly order set of attributes $[k]$ and encode each position of such a word by a segment of length $f$, where each position within a segment corresponds to a maximal path in the tree order $(A,\le)$.
This step is crucial as \NDA only navigate according to linearly ordered attributes.

For translating the obtained formula $\varphi$ into an \NDA, the set $AP_\varphi$ of atomic propositions used by $\varphi$ is extended by propositions $p^\psi_j$ and $=^x_j$ for each $-r_\mathrm{max}\le j\le r_\mathrm{max}$ and subformula $\psi$ and attribute $x$ of $\varphi$, where $r_{max}$ denotes the largest (absolut) value used by $\C^{r}_x$ operators.
As before positional formulae can be checked by the base automaton.
Class formulae of the form $\C^r_x \psi$ can be handled by the local automaton corresponding to attribute $x$.
Propositions $=^x_j$ are checked separately for each attribute $x$ by adapting the construction used for Lemma~\ref{lem:ra}.

Now, together with Theorem~\ref{thm:emptiness-nda}, we obtain a decission procedure for $\NDLTL^\pm$.

\begin{theorem}\label{thm:decidable-ndltl}
Satisfiability of $\NDLTL^+$ is decidable over \emph{finite and infinite} data words.
Satisfiability of $\NDLTL^-$ is decidable over \emph{finite} data words.
\end{theorem}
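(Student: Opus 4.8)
The plan is to prove Theorem~\ref{thm:decidable-ndltl} by assembling the machinery developed in the two preceding subsections, exactly mirroring the structure of the $\BDLTL^\pm$ argument in Section~\ref{sec:bdltl-sat} but now accounting for the nested data. The overall reduction chain is: an $\NDLTL^+$ (resp.\ $\NDLTL^-$) formula is translated into an \sNDA (resp.\ \pNDA), whose emptiness is decidable by Theorem~\ref{thm:emptiness-nda}; since the translation is satisfiability-preserving, decidability of satisfiability follows. I would carefully track which fragment maps to which automaton class and which word semantics is supported, because the asymmetry in Theorem~\ref{thm:emptiness-nda} (\sNDA over finite \emph{and} infinite words, but \pNDA only over finite words) is precisely what forces the asymmetric statement of this theorem.

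First I would carry out the linearization step described in Section~\ref{ssc:ndltl2nda}: given an \NDLTL formula over an arbitrary tree order $(A,\le)$, replace each position by a segment of length $f$ (the number of maximal paths in the attribute tree), so that each position within a segment corresponds to one maximal path, yielding an equisatisfiable formula over the linearly ordered attribute set $[k]$. This is essential because \NDA navigate only along linearly ordered attributes. As in the single-attributed case this step stays within the respective fragment, since it only introduces adjusted offsets in the $\C^r_x$ operators and auxiliary propositions, and it incurs at most polynomial blow-up. Second, I would invoke the automaton construction: extend $AP_\varphi$ by the annotation propositions $p^\psi_j$ and $=^x_j$, let the base automaton handle positional formulae and the propositions $p^\psi_j$ with $j\neq 0$, let the local automaton at level $x$ handle the class formulae $\C^r_x\psi$ using the surrounding context window of width $r_\mathrm{max}$, and verify the data-equality annotations $=^x_j$ for each attribute separately by the register-automaton simulation adapted from Lemma~\ref{lem:ra}.

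The crucial correctness point, and the step I expect to require the most care, is establishing that the restriction on temporal operators in each fragment transfers to the correct local-closure property of the automaton. Concretely I would argue that in $\NDLTL^+$ the absence of $\Y^=$ and $\S^=$ means every class automaton $\mathcal{B}_i$ need only be suffix-closed (all states initial), giving an \sNDA, whereas in $\NDLTL^-$ the absence of $\X^=$ and $\U^=$ makes each $\mathcal{B}_i$ prefix-closed (all states final and Büchi-accepting), giving a \pNDA. This is the nested analogue of the argument in Section~\ref{ssc:bdltl2da} and the main obstacle is checking that Lemma~\ref{lem:ra} indeed preserves the closure property at \emph{every} level simultaneously, since the register simulation must coexist with the nesting of the $k$ class automata without breaking suffix- or prefix-closedness of any $\mathcal{B}_i$.

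Finally I would combine everything. For $\NDLTL^+$ the translation yields an \sNDA whose emptiness is decidable over both finite and infinite words by Theorem~\ref{thm:emptiness-nda}, hence satisfiability of $\NDLTL^+$ is decidable over finite and infinite data words. For $\NDLTL^-$ the translation yields a \pNDA, whose emptiness is decidable only over finite words by the same theorem, hence we obtain decidability of $\NDLTL^-$ satisfiability over finite data words. I would note that these are exactly the cases left open, complementing the undecidability of $\NDLTL^-$ over infinite words established in Theorem~\ref{thm:undecidable-ndltl}; the \pNDA emptiness problem over infinite words is not claimed decidable, which is consistent with that undecidability and is why no infinite-word claim is made for $\NDLTL^-$ here.
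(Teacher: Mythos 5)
Your proposal is correct and takes essentially the same approach as the paper: linearize the tree order $(A,\le)$ into a segment encoding over $[k]$, translate $\NDLTL^+$ into an \sNDA and $\NDLTL^-$ into a \pNDA (with the adapted register-automaton simulation of Lemma~\ref{lem:ra} preserving suffix- resp.\ prefix-closedness of every class automaton), and conclude via Theorem~\ref{thm:emptiness-nda}. The fragment-to-automaton mapping you use agrees with the paper's construction in Section~\ref{ssc:bdltl2da} (note the introduction states it the other way around, apparently a slip), and your closing remark on the asymmetry matches the paper's Theorem~\ref{thm:undecidable-ndltl} and Corollary~\ref{cor:undecidable-pnda}.
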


\begin{corollary}\label{cor:undecidable-pnda}
  Emptiness of \pNDA wrt.\ infinite data words is undecidable.
\end{corollary}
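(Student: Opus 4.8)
The plan is to derive undecidability of \pNDA emptiness over infinite words from the undecidability of $\NDLTL^-$ satisfiability over infinite words, established in Theorem~\ref{thm:undecidable-ndltl}. The key observation is that the translation machinery of Section~\ref{ssc:ndltl2nda} pairs the two logical fragments with the two automaton restrictions in a \emph{duality} that is visible already in Section~\ref{ssc:bdltl2da}: future-restricted navigation ($\NDLTL^+$) yields suffix-closed automata (\sNDA), while past-restricted navigation ($\NDLTL^-$) yields prefix-closed automata (\pNDA). Since the translation from $\NDLTL^-$ into \pNDA preserves satisfiability/emptiness and works uniformly over finite \emph{and} infinite data words, an algorithm deciding \pNDA emptiness over infinite words would immediately decide $\NDLTL^-$ satisfiability over infinite words, contradicting Theorem~\ref{thm:undecidable-ndltl}.

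Concretely, I would first argue that the construction in Section~\ref{ssc:ndltl2nda} produces, from any $\NDLTL^-$ formula $\varphi$, a \pNDA $\mathcal{D}_\varphi$ with $\mathcal{L}(\mathcal{D}_\varphi)\neq\emptyset$ iff $\varphi$ is satisfiable. The crucial point to check is that the reduction is insensitive to the finite-versus-infinite distinction: the linearisation of the tree order, the segment encoding of the $f$ maximal paths, the annotation propositions $p^\psi_j$ and $=^x_j$, and the \RA-simulation of Lemma~\ref{lem:ra} all operate locally and position-wise, so they apply verbatim to $\omega$-words. In particular, the acceptance condition of \NDA over infinite words (clauses (I)--(III)) matches the $\omega$-semantics of the class operators $\Y^=$ and $\S^=$ occurring in $\NDLTL^-$.

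With this correspondence in place, the proof is a one-line reduction: if \pNDA emptiness over infinite words were decidable, then given any $\NDLTL^-$ formula $\varphi$ we could construct $\mathcal{D}_\varphi$ and test $\mathcal{L}(\mathcal{D}_\varphi)\stackrel{?}{=}\emptyset$, thereby deciding satisfiability of $\varphi$ over infinite data words. This contradicts the second assertion of Theorem~\ref{thm:undecidable-ndltl}, so \pNDA emptiness over infinite words must be undecidable.

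The main obstacle I anticipate is not conceptual but bookkeeping: one must confirm that the infinite-word acceptance conditions survive the reduction intact, i.\,e.\ that a class-string which is infinite in the original data word maps to an infinite class-string at the appropriate level of $\mathcal{D}_\varphi$ and is handled by clause (III) rather than being erroneously treated as terminating. Because the segment encoding multiplies positions by the branching factor $f$ but never removes infinitely many of them, this preservation is routine; nonetheless it is the step that must be stated carefully, since the whole argument hinges on the reduction being \emph{uniform} across the finite and infinite settings.
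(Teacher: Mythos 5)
Your proposal is correct and is exactly the paper's intended argument: the corollary follows from the translation of $\NDLTL^-$ into \pNDA (Section~\ref{ssc:ndltl2nda}), which is uniform over finite and infinite words, combined with the undecidability of $\NDLTL^-$ satisfiability over infinite words (Theorem~\ref{thm:undecidable-ndltl}). Your additional care about the acceptance conditions and the segment encoding preserving infinite class strings is a reasonable point to verify but matches what the paper's construction already provides.
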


\bibliography{references_full}

\begin{thebibliography}{10}

\bibitem{DBLP:journals/tocl/BojanczykDMSS11}
Bojanczyk, M., David, C., Muscholl, A., Schwentick, T., Segoufin, L.:
\newblock Two-variable logic on data words.
\newblock ACM Trans. Comput. Log. \textbf{12}(4) (2011) ~27

\bibitem{DBLP:journals/corr/abs-1110-1439}
Schwentick, T., Zeume, T.:
\newblock Two-variable logic with two order relations.
\newblock Logical Methods in Computer Science \textbf{8}(1) (2012)

\bibitem{DBLP:journals/iandc/DemriLN07}
Demri, S., Lazic, R., Nowak, D.:
\newblock On the freeze quantifier in constraint {LTL}: Decidability and
  complexity.
\newblock Inf. Comput. \textbf{205}(1) (2007)  2--24

\bibitem{DBLP:journals/tocl/DemriL09}
Demri, S., Lazic, R.:
\newblock {LTL} with the freeze quantifier and register automata.
\newblock ACM Trans. Comput. Log. \textbf{10}(3) (2009)

\bibitem{DBLP:conf/lics/DemriFP13}
Demri, S., Figueira, D., Praveen, M.:
\newblock Reasoning about data repetitions with counter systems.
\newblock In: LICS, IEEE Computer Society (2013)  33--42

\bibitem{DBLP:journals/tocl/NevenSV04}
Neven, F., Schwentick, T., Vianu, V.:
\newblock Finite state machines for strings over infinite alphabets.
\newblock ACM Trans. Comput. Log. \textbf{5}(3) (2004)  403--435

\bibitem{DBLP:journals/tcs/KaminskiF94}
Kaminski, M., Francez, N.:
\newblock Finite-memory automata.
\newblock Theor. Comput. Sci. \textbf{134}(2) (1994)  329--363

\bibitem{DBLP:journals/iandc/BouyerPT03}
Bouyer, P., Petit, A., Th{\'e}rien, D.:
\newblock An algebraic approach to data languages and timed languages.
\newblock Inf. Comput. \textbf{182}(2) (2003)  137--162

\bibitem{DBLP:journals/tcs/BjorklundS10}
Bj{\"o}rklund, H., Schwentick, T.:
\newblock On notions of regularity for data languages.
\newblock Theor. Comput. Sci. \textbf{411}(4-5) (2010)  702--715

\bibitem{DBLP:conf/fossacs/TzevelekosG13}
Tzevelekos, N., Grigore, R.:
\newblock History-register automata.
\newblock In Pfenning, F., ed.: FoSSaCS. Volume 7794 of LNCS., Springer (2013)
  17--33

\bibitem{DBLP:conf/mfcs/BjorklundB07}
Bj{\"o}rklund, H., Bojanczyk, M.:
\newblock Shuffle expressions and words with nested data.
\newblock In Kucera, L., Kucera, A., eds.: MFCS. Volume 4708 of LNCS., Springer
  (2007)  750--761

\bibitem{DBLP:conf/fsttcs/KaraSZ10}
Kara, A., Schwentick, T., Zeume, T.:
\newblock Temporal logics on words with multiple data values.
\newblock In Lodaya, K., Mahajan, M., eds.: FSTTCS. Volume~8 of LIPIcs. (2010)
  481--492

\bibitem{DBLP:conf/popl/Leroux11}
Leroux, J.:
\newblock Vector addition system reachability problem: a short self-contained
  proof.
\newblock In Ball, T., Sagiv, M., eds.: POPL, ACM (2011)  307--316

\bibitem{DBLP:conf/ershov/LomazovaS99}
Lomazova, I.A., Schnoebelen, P.:
\newblock Some decidability results for nested {P}etri nets.
\newblock In Bj{\o}rner, D., Broy, M., Zamulin, A.V., eds.: Ershov Memorial
  Conference. Volume 1755 of LNCS., Springer (1999)  208--220

\bibitem{DBLP:journals/tcs/FinkelS01}
Finkel, A., Schnoebelen, P.:
\newblock Well-structured transition systems everywhere!
\newblock Theor. Comput. Sci. \textbf{256}(1-2) (2001)  63--92

\bibitem{DBLP:journals/bsl/Abdulla10}
Abdulla, P.A.:
\newblock Well (and better) quasi-ordered transition systems.
\newblock Bulletin of Symbolic Logic \textbf{16}(4) (2010)  457--515

\bibitem{DBLP:conf/mfcs/Schnoebelen10}
Schnoebelen, P.:
\newblock Revisiting ackermann-hardness for lossy counter machines and reset
  {P}etri nets.
\newblock In Hlinen{\'y}, P., Kucera, A., eds.: MFCS. Volume 6281 of LNCS.,
  Springer (2010)  616--628

\bibitem{DBLP:journals/logcom/DemriDG12}
Demri, S., D'Souza, D., Gascon, R.:
\newblock Temporal logics of repeating values.
\newblock J. Log. Comput. \textbf{22}(5) (2012)  1059--1096

\bibitem{DBLP:journals/corr/abs-1010-1139}
Kara, A., Schwentick, T., Zeume, T.:
\newblock Temporal logics on words with multiple data values.
\newblock CoRR \textbf{abs/1010.1139} (2010)

\bibitem{DBLP:journals/tcs/Rackoff78}
Rackoff, C.:
\newblock The covering and boundedness problems for vector addition systems.
\newblock Theor. Comput. Sci. \textbf{6} (1978)  223--231

\bibitem{DBLP:conf/apn/Habermehl97}
Habermehl, P.:
\newblock On the complexity of the linear-time $\mathrm{\mu}$-calculus for
  {P}etri nets.
\newblock In Az{\'e}ma, P., Balbo, G., eds.: ICATPN. Volume 1248 of LNCS.,
  Springer (1997)  102--116

\bibitem{DBLP:conf/stacs/BouajjaniM99}
Bouajjani, A., Mayr, R.:
\newblock Model checking lossy vector addition systems.
\newblock In Meinel, C., Tison, S., eds.: STACS. Volume 1563 of LNCS., Springer
  (1999)  323--333

\end{thebibliography}

\newpage
\appendix

\section{Local Navigation in \BDLTL}
\label{apx:bdltl-fragments}

A straight forward lemma that is used implicitly in the constructions is that the classes of \pDA and \sDA are closed under union and intersection.
\begin{lemma}[Closure]
  Suffix- and prefix-closed data automata are closed under union and intersection.
\end{lemma}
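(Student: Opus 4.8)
The plan is to prove closure under both operations by working separately on the base transducer and on the class automata, exploiting that the defining restrictions of \pDA and \sDA (all class-automaton states final-and-Büchi, respectively all states initial) are preserved by the standard automata constructions.

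First I would handle \emph{intersection}. Given two suffix-closed (resp.\ prefix-closed) data automata $\mathcal{D}_1=(\mathcal{A}_1,\mathcal{B}_1)$ and $\mathcal{D}_2=(\mathcal{A}_2,\mathcal{B}_2)$ over the same input alphabet $\Sigma$, the natural construction is a product. The base transducers $\mathcal{A}_1,\mathcal{A}_2$ are letter-to-letter transducers reading $\Sigma$ and producing labels in $\Gamma_1,\Gamma_2$; I would form the product transducer $\mathcal{A}$ with state space $Q_1\times Q_2$ that on input $a$ simulates both, emitting the \emph{pair} $(g_1,g_2)\in\Gamma_1\times\Gamma_2$ of the two output labels. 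The class automaton $\mathcal{B}$ then reads the product alphabet $\Gamma_1\times\Gamma_2$ and is the product $\mathcal{B}_1\times\mathcal{B}_2$, where a transition on $(g_1,g_2)$ simulates the $\mathcal{B}_1$-step on $g_1$ and the $\mathcal{B}_2$-step on $g_2$. The key observation is that each instance of $\mathcal{B}$ tracks one data value simultaneously through both original class automata, so the class string projection is accepted by $\mathcal{B}$ iff it is accepted by both $\mathcal{B}_1$ and $\mathcal{B}_2$; together with the product acceptance on the base side this yields $\mathcal{L}(\mathcal{D})=\mathcal{L}(\mathcal{D}_1)\cap\mathcal{L}(\mathcal{D}_2)$. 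I would then verify that the restriction is preserved: in the suffix-closed case every state of $\mathcal{B}_1,\mathcal{B}_2$ is initial, hence every state of the product $\mathcal{B}_1\times\mathcal{B}_2$ is initial, so $\mathcal{D}$ is again an \sDA; dually, if every state of $\mathcal{B}_1,\mathcal{B}_2$ is final and Büchi-accepting then so is every product state, giving a \pDA.

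Next I would treat \emph{union}. The cleanest route is a disjoint-union (sum) construction on the base transducer: introduce a fresh initial state that nondeterministically commits, on the first letter, to simulating either $\mathcal{A}_1$ or $\mathcal{A}_2$ thereafter, and correspondingly tag the emitted labels so the class automaton knows which branch it is in (e.g.\ output over $\Gamma_1\cupdot\Gamma_2$). The class automaton is then the disjoint union $\mathcal{B}_1\cupdot\mathcal{B}_2$, where an instance reading $\Gamma_1$-labels behaves as $\mathcal{B}_1$ and one reading $\Gamma_2$-labels as $\mathcal{B}_2$; since the base transducer commits globally to one branch, all instances in a given run consistently use the same $\mathcal{B}_j$, so acceptance corresponds exactly to acceptance by $\mathcal{D}_j$, and the overall language is the union. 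Again the restriction carries over because a disjoint union of automata all of whose states are initial (resp.\ final-and-Büchi) has all states initial (resp.\ final-and-Büchi).

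The main obstacle I anticipate is not the transducer-level bookkeeping but the \emph{infinite-word acceptance conditions}, which in the definition of \DA are delicate: acceptance requires, per data value occurring last at some position, that its instance end in a final state, and per infinitely recurring data value, infinitely many Büchi-accepting visits. I would need to check that the product (for intersection) and disjoint-union (for union) constructions correctly combine these per-value conditions---in particular that the Büchi condition on $\mathcal{B}_1\times\mathcal{B}_2$ genuinely captures the conjunction of the two recurrence requirements, which may force a standard generalized-Büchi flag to alternate between the two components so that a product-Büchi state is visited infinitely often iff both component conditions hold. Fortunately, for the restricted classes this is largely automatic: in \pDA all class states are Büchi-accepting, so the recurrence condition is vacuous, and in \sDA the symmetric simplification applies on the reversed automaton, so the subtle part of the acceptance condition collapses and the constructions go through cleanly.
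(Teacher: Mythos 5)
Your proposal is correct and follows essentially the same route as the paper's proof: intersection via the standard product construction carried out separately on the base transducers and the class automata, and union via a nondeterministic choice in the base automaton signalled by a flag on the output so that each class-automaton instance simulates the chosen component, with the observation that ``all states initial'' (\sDA) respectively ``all states final and Büchi-accepting'' (\pDA) is preserved by both constructions. One small caveat: your closing remark that for \sDA the Büchi subtlety ``collapses on the reversed automaton'' is inaccurate, since reversal is only available for finite words, but your earlier suggestion of a generalized-Büchi flag in the class-automaton product already handles \sDA intersection over infinite words, and adding that flag does not disturb the all-states-initial property.
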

\begin{proof}
For the intersection of two \sDA or two \pDA, carry out the usual product construction the base and class automata separately.
An automaton accepting the union can be constructed by letting the base automaton perform a non-deterministic choice of one of automata and output a flag on the first letter indicating that choice.
The class automaton then simulates the class automaton of the data automaton that was chosen.
It is easy to see that these constructions result in an \sDA or \pDA if the two original automata were both an \sDA or a \pDA, respectively.
\end{proof}

\section{Satisfiability of $\BDLTL^\pm$ is \textnormal{\textsc{2ExpSpace}}-complete}
\label{apx:bdltl-sat}

\subsection{\textnormal{\textsc{ExpSpace}}-variants of Data Automata}

For showing that an \sNDA $\mathcal{D}$ can be checked for emptiness wrt.\ infinite words we use Lemma~\ref{lem:sda-properties} for which we provide more detailed proof here.
Recall that $\mathcal{D} = (\mathcal{A},\mathcal{B})$ with $\mathcal{A}=(Q,\Sigma,\Gamma,\delta_\mathcal{A},Q_0,\emptyset,B_\mathcal{A})$ and $\mathcal{B}=(S,\Sigma,\delta,I,F,B)$.

\subsubsection{Lemma~\ref{lem:sda-properties} (necessity).}
Assume $\mathcal{D}$ has an accepting run $\rho\in(Q\times\mathfrak{F})^\omega$ on some word $w\in(\Sigma\times\Delta)^\omega$.
Let $T\subseteq\delta$ be the set of transitions of the class automaton $\mathcal{B}$ taken infinitely often by $\rho$.
Then, there exists some position $i$ on $\rho$ with $\rho_i = (q,f)$ and, in the suffix $\rho_i\rho_{i+1}…$, only transitions from $T$ are taken by (any instance of) the class automaton $\mathcal{B}$.
If there were a transition $t\in T$ violating property (A1), some instance of $\mathcal{B}$ would reject since $t$ is taken eventually.
As this is not the case, property (A1) holds.

Second, we record that the configuration $(q,f)$ meets the requirements of property (A3).
We assumed that only transitions from $T$ are taken after position $i$ carrying $(q,f)$.
Hence, if there were some data value $d\in\Delta$ violating (A3), the corresponding instance of $\mathcal{B}$ would not accept since it can neither take any transition anymore nor is it in an accepting state.

Third, the suffix $\rho_i\rho_{i+1}…$ of $\rho$ is a witness that (A2) is satisfied.

\subsubsection{Lemma~\ref{lem:sda-properties} (sufficiency).}
To see that the opposite direction also holds, consider a run $\rho$ of $\mathcal{D}$ that leads to $(q,f)$ and continues by applying the the sequence of transitions $\tau=t_1t_2…$ provided by (A2).
$\rho$ is accepting since $\mathcal{A}$, starting in $q$, can correctly continue to move and produce the labels of the transitions while meeting its Büchi condition.
Further, each active instance of $\mathcal{B}$ is identified by some data value $d\in\Delta$ with
$f(d)\neq\bot$.
For those with $f(d)\in F$, we can just consider them discontinued, hence they accept.
A crucial point is that we can indeed always apply the transition sequence $\tau$ since even if there is no active instance of $\mathcal{B}$ that allows for a particular transition $t=(s,g,s')$, we can always spawn a new instance of $\mathcal{B}$ in $s\in S$ since $\mathcal{B}$ is \emph{suffix closed}, i.\,e.\, all states are initial.

Property (A1) guarantees, that all instances of $\mathcal{B}$ are accepting when considering the following scheduling approach.
Put all (finitely many) active instances in some queue of „temporarily completed“ instances.
Whenever a new instance is created it is appended to the queue.
Take the first instance of the queue. 
The last transition taken by it, there is a finite sequence of transitions that leads $\mathcal{B}$ to a final or Büchi state.
This sequence is guaranteed to occur as a (possibly scattered) subsequence in $\tau$ and we just wait for the next suitable transition to occur while dispatching all intermediate transitions to other (existing or new) instances of $\mathcal{B}$.
Upon reaching a final or Büchi state we can dismiss it or append it to the queue, respectively.
All instances are scheduled infinitely often or terminate in some final state.
This way we can construct an accepted data word where the data value at each position corresponds to the active instance of $\mathcal{B}$.

\subsection{From $\BDLTL^\pm$ to Data Automata}

\begin{proof}[Simulating \RA (Lemma \ref{lem:ra})]
Intuitively, the \DA $\mathcal{D'}$ that is supposed to be empty if and only if the intersection of the \DA $\mathcal{D}$ and the \RA $\mathcal{R}$ is empty, simulates both simultaneously. 
The base automaton of $\mathcal{D}'$ simulates the base automaton of $\mathcal{D}$ and the finite control of $\mathcal{R}$. 
In its state it keeps track of the registers currently in use. 
A register becomes marked as in use, when a data value is stored. 
Before a new data value can be stored, the register has to be marked as free. Therefore, the base automaton takes a special transition marked by the new input symbol $\$$ and guesses the data value currently stored in the register. The base automaton encodes in its output when it marks a register as in use or as free, and which registers have to be compared for equality or inequality with respect to the current data value. 
The class automaton of $\mathcal{D}'$ simulates the class automaton of $\mathcal{D}$ and keeps track of the registers the data value associated with the class projection it reads is currently stored in. 
Thus it can verify that the current data value is actually stored in the register that the base automaton expects.

For sake of simplicity, we do not consider Büchi accepting states for the following construction. They can be easily handled using the usual product construction for Büchi automata.
Let $\mathcal{D} = (\mathcal{A}, \mathcal{B})$ be a data automaton $\mathcal{A} = (Q_\mathcal{A}, \Sigma, \Gamma, \delta_\mathcal{A}, I_{\mathcal{A}}, F_{\mathcal{A}}, \emptyset)$ and $\mathcal{B} = (Q_\mathcal{B}, \Gamma, \delta_\mathcal{B}, I_{\mathcal{B}}, F_{\mathcal{B}}, \emptyset)$.
Let $\mathcal{R} = (Q_\mathcal{R}, \Sigma, k, \delta_\mathcal{R}, I_{\mathcal{R}}, F_{\mathcal{R}}, \emptyset)$ be a register automaton.
We can define $\mathcal{D}' = (\mathcal{A}', \mathcal{B}')$ with base automaton $\mathcal{A}' = (Q_{\mathcal{A}'}, \Sigma', \Gamma', \delta_{\mathcal{A}'}, I_{\mathcal{A}'}, F_{\mathcal{A}'})$ and class automaton $\mathcal{B}' = (Q_{\mathcal{B}'}, \Gamma', \delta_{\mathcal{B}'}, I_{\mathcal{B}'}, F_{\mathcal{B}'})$ with $\Sigma' = \Sigma \cup \{\$\}$ and $\Gamma' = \Gamma \times 2^{[k]} \times 2^{[k]} \times [k] \cup [k]$ by:
  \begin{itemize}
    \item $Q_{\mathcal{A}'} = Q_{\mathcal{A}} \times Q_{\mathcal{R}} \times 2^{[k]}$
    \item $I_{\mathcal{A}'} = I_{\mathcal{A}} \times I_{\mathcal{R}} \times \{\emptyset\}$
    \item $F_{\mathcal{A}'} = F_{\mathcal{A}} \times F_{\mathcal{R}} \times 2^{[k]}$
    \item $((q_\mathcal{A},q_\mathcal{R},R),a,(\gamma,R_=,R_{\ne},r),(q_\mathcal{A}',q_\mathcal{R}',R\cup\{r\})) \in \delta_{\mathcal{A}'}$ iff:
    \begin{itemize}
      \item $(q_\mathcal{R}, R_=, R_{\ne}, a, r, q_\mathcal{R}') \in \delta_\mathcal{R}$
      \item $(q_\mathcal{A}, a, \gamma, q_\mathcal{A}') \in \delta_\mathcal{A}$
      \item $r \not\in R$
    \end{itemize}
    \item $((q_\mathcal{A}, q_\mathcal{R}, R), \$, r, (q_\mathcal{A}, q_\mathcal{R}, R \setminus \{r\})) \in\delta_{\mathcal{A}'}$ iff $r \in R$
    \item $Q_{\mathcal{B}'} = Q_\mathcal{Q} \times 2^{[k]}$
    \item $I_{\mathcal{B}'} = I_{\mathcal{B}} \times \{\emptyset\}$
    \item $F_{\mathcal{B}'} = F_{\mathcal{B}} \times \{\emptyset\}$
    \item $((q_\mathcal{B},R),(\gamma,R_=,R_{\ne},r),(q_\mathcal{B}',R\cup\{r\})) \in \delta_{\mathcal{B}'}$ iff
    \begin{itemize}
      \item $(q_\mathcal{B},\gamma,q_\mathcal{B}') \in \delta_\mathcal{B}$
      \item $R_= \in R$
      \item $R \cap R_- = \emptyset$
      \item $r \not\in R$
    \end{itemize}
    \item $((q_\mathcal{B},R),(\gamma,R_=,R_{\ne},r),(q_\mathcal{B},R\setminus\{r\})) \in \delta_{\mathcal{B}'}$ iff $r\in R$
  \end{itemize}
  In every step for every non-empty register $r$ only one instance of the class automaton my be in a state, denoting that the associated data value is stored in $r$. This is guaranteed by the construction  by assuring, using the base automaton, that a new data value is only stored once a register as been freed and that it is only freed if it contains a value, and by assuring using the class automaton, that the value at the storing position is the same as at the next following freeing position. It can be observed, that when the class automaton sees a store action, it will eventually see a free action (and no store or free action in between), or vice versa when it sees a free action, it has already seen a store action.
  Hence, the \DA can be turned into a prefix or suffix closed \DA by turning all states either into final (and Büchi accepting) or initial states.
\end{proof}

\section{Ordered Navigation on Multi-attributed Data Words}

\begin{proof}[Tuple Navigation (Proposition \ref{prp:tuple-nav-undecidable})]
  $\BDLTL^+$ subsumes $LRV^\top$ which is known to be undecidable already over finite words when being extended by tuple navigation \cite{DBLP:conf/lics/DemriFP13}.
Let $\varphi$ be a $\BDLTL^+$ formula over finite words. 
We can construct a satisfiability-equivalent formula $\hat{\varphi}$ by replacing all future operators by past operators and vice versa and replacing every subformula $\psi$ by $\neg \$ \wedge \psi$. Now $\F (\hat{\varphi} \wedge \X \G \$)$ is satisfiable if and only if $\varphi$ is. 
Thus, the satisfiability problem of $\BDLTL^-$ tuple navigation is also undecidable.
\end{proof}

\begin{proof}[Encoding rMCS (Lemma \ref{lem:encoding-rmcs})]
  For the set of attributes we take $A :=\lbrace x_c,\hat{x}_c\mid c\in C\rbrace$, i.\,e., two attributes for each counter of $\mathcal{M}$.
  Let the tree order $\le$ be defined s.\,t.\ attributes for different counters are incomparable and  $x_c>\hat{x}_c$ for all $c\in C$.
  We let $\Phi_\mathcal{M}$ be the the conjunction composed the following formulae.
  \begin{itemize}
    \item $\Phi_\delta$ specifies that a data word has to have the shape of a run 
           according to the finite relation $\delta$, e.\,g., the correct order of 
           states, followed by an operator, a counter and, again, a state.           
           This can be done by only using plain LTL formulae.
    \item $\Phi_\mathrm{res} := \bigwedge_{c\in C}\G((\mathrm{res} \land \X c) \to 
          \C^0_{\hat{x}_c}\neg\Y^= \top)$ specifies, that whenever a reset happens on
          any counter $c$, the current data value has never been seen before 
          (in any of the top-level attributes $\hat{x}_{c'}$, including that 
          for $c$).
    \item $\Phi_\mathrm{dec} := \bigwedge_{c\in C}\G((\mathrm{dec} \land \X \mathrm{c}) \to 
            \C^0_{x_c} \Y^=(\mathrm{inc} \land \X c))$ says, that for
            each decrement operation, the previous position with the same data 
            must carrying an increment operation on the same counter.
            Note that in conjunction with $\Phi_\mathrm{res}$, there must not be a 
            reset on the same counter in between since using $x_c$ in the formula
            means comparing the values of both attributes, $x_c$ and $\hat{x}_c$.
      \end{itemize}
\vspace{-\baselineskip}
\end{proof}

\section{Deciding Satisfiability of $\NDLTL^\pm$}\label{apx:ndltl-sat}

\subsection{Nested Data Automata}
\label{spx:nda}

In the following we provide the proof for Lemma~\ref{lem:snda-properties}, stating that it is necessary and sufficient to find a set $T\in\delta_1$, a state $q\in Q$ and a set $M$ of labeled trees such that the conditions (B1)--(B4) hold, in order to decide that an $k$-\sNDA $\mathcal{D}$ to be decide non-empty.

Let $\mathcal{D} = (\mathcal{A},\mathcal{B}_1,…,\mathcal{B}_k)$ with 
\begin{itemize}
  \item $\mathcal{A} = (Q,\Sigma,\Gamma,\delta_\mathcal{A},I_\mathcal{A},\emptyset, B_\mathcal{A})$ and 
  \item $\mathcal{B}_i = (S_i,\Gamma,\delta_i,S_i,F_i, B_i)$ for $i\in[k]$.
\end{itemize}

\begin{proof}[Lemma \ref{lem:snda-properties}, necessity]
Assume $\mathcal{D}$ has an accepting run $\rho\in(Q\times\mathfrak{F}_1\times…\times\mathfrak{F}_k)^\omega$ on some infinite data word $w=w_0w_1…\in(\Sigma\times\Delta^{[k]})^\omega$ with $w_i=(a_i,\vec{d}_i)$.
The run $\rho$ induces a sequence $\tau\in\delta_1^\omega$ of transitions of $\mathcal{B}_1$ that are performed between consecutive configurations.
This need not to be a run in $\mathcal{B}_1$ but it is by definition a shuffle of runs of $\mathcal{B}_1$ and since $\rho$ is accepting each of these \emph{class} runs is accepting.
Let $T\subseteq\delta_1$ be the maximal set of transitions of $\mathcal{B}_1$ that occur infinitely often in $\tau$.

There is a position $N$ in $\tau$ (and $\rho$) after which only transitions from $T$ occur.
Any transition $t_1\in T$ occurs eventually on the suffix of $\tau$ starting at this position.
The class run that $t_1$ belongs to is accepting and since all states in $\mathcal{B}_1$ are initial also each suffix of that class run is an accepting run of $\mathcal{B}_1$. 
In particular, the suffix starting with $t_1$. 

This is our witness for (B1): It consists only of transitions from $T$. 
Further, by the definition runs of \NDA and the fact that all class automata states are initial, the sequence $\gamma$ of labels along that run of $\mathcal{B}_1$ is a shuffle of words accepted by $\mathcal{B}_2$ which in turn are shuffles of words accepted by $\mathcal{B}_3$ and so on.
Intuitively, the instances of $\mathcal{B}_2$ reading parts of $\gamma$ are those where the corresponding data valuation is an extension of the data valuation corresponding the instance of $\mathcal{B}_1$ reading $\gamma$.

We choose the state $q$ to be that occurring in the configuration $\rho_N = (q,f_1,…,f_k)$ at position $N$ in $\rho$.
Recall, that after $N$ only transitions from $T$ occur in $\tau$.
The the suffix $\tau_{N+1}\tau_{N+2}…$ of $\tau$ is then the witness for (B2). 
Since $\rho$ is accepting, $\mathcal{A}$, starting in $q$ can output the sequence of labels corresponding to $\tau_{N+1}\tau_{N+2}…$.

Third, we record that $\rho_N = (q, f_1,…,f_k)$ is a reachable configuration for which property (B3) holds.
Property (B4) states that at the configuration $\rho_N$ fixed for property (B3), each active instance of a class automaton $\mathcal{B}_x$ and all instances depending on it (present and spawned at later positions) must accept when continuing with $\rho$.
This is the case since $\rho$ is accepting.
\end{proof}

\begin{proof}[Lemma \ref{lem:snda-properties}, sufficiency]
We have seen that if an \sNDA $\mathcal{D}$ is non-empty, there is a reachable configuration $c = (q,f_1,…,f_k)$ inducing a set $M_c$ of trees and a set of transitions $T$ such that properties (1)-(4) hold.

To see that the opposite direction also holds, consider a run $\rho$ of some \sNDA $\mathcal{D} = (\mathcal{A},\mathcal{B}_1,…,\mathcal{B}_k)$ that leads to the configuration $c = (q,f_1,…,f_k)$ and continues by applying the the sequence of transitions $\tau\in T^\omega$ provided by (B2).
The base automaton $\mathcal{A}$ accepts since starting in $q$ it can correctly continue to move and produce the labels of the transitions while meeting its Büchi condition.
This run of $\mathcal{A}$ already yields the string projection for data word $w$ accepted by $\mathcal{D}$ and it remains to argue that it is possible to correctly choose data values for $w$.

Up to reaching configuration $c$, property (B3) assures that there is a choice of data values such that all class projections are accepted by the corresponding class automaton, except for those represented in $M_c$.
By (B4) we can “complete” these instances as follows.
For each root of a tree $(m,s)\in M_c$ we know that there is a sequence of transitions $\tau_{(m,s)}\in T^\infty$ of $\mathcal{B}_1$ such that $\mathcal{B}_1$ accepts. 
Further, that run induces a word $\gamma_{(m,s)}\in\Gamma^\infty$ and it is possible to decompose it into subsequences that can be assigned to the depending active instances (represented by the other nodes in $m$) or to newly spawned instances of class automata of the respective level such that all those instances accept.
Note that a spawning a new instance on some level $x$, which corresponds to a fresh data value in the model, implies spawning a new dependent instance for each level $y>x$ as the data words we consider do not allow for missing data values.
Property (B4) assures acceptance of those as the words annotated to leaf nodes that are not on the bottom level must still be a shuffle of words accepted by the class automata $\mathcal{B}_{y'}$, $k\ge y'>y$, below.

We “execute” each sequence $\tau_{(m,s)}=t_1t_2…$ by choosing the first occurrence of $t_1$ in the global sequence $\tau$ and devote it to the instance represented by the root of $(m,s)$ by labeling the position of model $w$ where $t_1$ occurs with the corresponding data value.
We continuously choose the respective next occurrence of the transitions $t_2…$ and proceed analogously.
This is possible since (B2) guarantees that in $\tau$ each transition occurs always eventually.

Having treated these “obligations” imposed by $M_c$, we can now fill the remaining positions where data values are missing in $w$ as follows.
Choose fresh data values for the first free position.
For the transition $t\in T$ taken at this position in $\tau$, property (B1) provides a sequence assuring that the new instances spawned will accept.
This sequence is distributed over the global sequence $\tau$ the same way as the sequences $\tau_{(m,s)}$ above.
Note that we can always choose a fresh data value, no matter which transition needs to be taken since the corresponding class automaton is suffix-closed and can thus be started in any of its states.
The model $w$ constructed that way is a witness that $\mathcal{D}$ is non-empty.
\end{proof}

\subsection{From \NDLTL to \NDA}

For translating \NDLTL formulae to \NDA we first reduce formulae over arbitrary tree-orders $(A,\le)$ to a satisfiability-equivalent formula over the \emph{linearly} ordered set $[k]$.
We provide this translation in detail in terms of the following lemma.

\begin{lemma}
Let $(A,\le)$ be a tree order of attributes with tree-height $k$.
For each $\NDLTL^\pm$ formula over $(A,\le)$ we can construct a satisfiability-equivalent $\NDLTL^\pm$ formula over attributes $[k]$ with the linear order on the natural numbers.
\end{lemma}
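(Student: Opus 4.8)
The plan is to give an explicit encoding of tree-ordered data words by linearly ordered ones, to transform the formula along this encoding in the style of the multi-attribute elimination of Section~\ref{ssc:bdltl2da}, and then to prove satisfiability-equivalence in both directions.

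First I would fix the combinatorics of $(A,\le)$. Let $P_1,\dots,P_f$ be the maximal chains (root-to-leaf paths), enumerated in a fixed left-to-right (DFS) order, and for $x\in A$ let $d(x):=|x^\downarrow|\le k$ be its depth. I would encode an $A$-attributed word $w$ as a $[k]$-attributed word $\mathrm{enc}(w)$ in which every position of $w$ is replaced by a segment of $f$ consecutive positions; the $t$-th slot of the segment carries, at levels $1,\dots,|P_t|$, the values of $w$ along the chain $P_t$, padded by globally fresh values at deeper levels. The alphabet is enriched with fresh propositions marking the segment offset $0,\dots,f-1$ (so that segment starts, and the path each slot represents, are identifiable) together with, for each slot $t$ and depth $d$, a \emph{static} marker recording whether $t$ is the leftmost leaf below its depth-$d$ ancestor; since the leaves below a node are contiguous in DFS order, these markers single out one slot per internal node at every level.

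Second, I would translate the formula. Purely positional (Boolean and \LTL) subformulae are evaluated at segment starts and their operators scaled by $f$ using the offset propositions (e.g.\ $\X\varphi$ becomes a jump of $f$ positions, and $\U$/$\S$ are relativised to segment-start positions). A quantifier $\C^r_x\psi$ is translated by first moving, with a fixed number $c_x$ of positional $\X$/$\Y$ steps, to the statically determined canonical slot representing $x^\downarrow$ inside the current segment, then applying $\C^{r'}_{d(x)}$ with the corrected offset $r'=rf-c_x$ so as to land on the target segment, and finally the translated body. The local class operators $\X^=,\U^=$ (resp.\ $\Y^=,\S^=$) become the corresponding level-$d(x)$ operators of the linear logic, so that only $\C$, positional $\X$/$\Y$ and same-direction class operators are introduced; hence $\NDLTL^+$ or $\NDLTL^-$ is preserved.

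The delicate and, I expect, hardest step is making the linear class navigation \emph{faithful}, because one original position can be matched by several slots: a deeper slot already matches at a shallower level through its shared prefix, and two distinct equal-depth attributes may genuinely carry the same tuple (which \NDLTL can force via $@y$). I would neutralise the first effect by depth-tagging the stored values, i.e.\ pairing the level-$d$ value of a slot with the bit ``$d$ is exactly my depth'', so that a level-$d$ class operator only matches slots whose bound attribute sits at depth $d$; the leftmost-leaf markers then collapse the contiguous block of paths below one matching node to a single representative. The residual coincidences I would handle by adding to the translated formula a structural conjunct that, using class quantifiers and the offset propositions, enforces that within every segment at most one navigable representative exists per depth-and-value and routes every binding to it. With this, $\mathrm{pos}$ of a bound tuple contains at most one slot per original position, so nested $\X^=$ (or $\S^=$) visits each original position exactly once. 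Finally I would establish both directions: $\mathrm{enc}(w)$ satisfies the structural conjuncts by construction, and a structural induction shows it satisfies the translated formula iff $w$ satisfies the original; conversely any $[k]$-model of the translated formula meets the structural constraints, hence is the encoding of a well-defined tree-ordered word obtained by reading each segment's slots back into the chains $P_1,\dots,P_f$, and the same induction yields that this word satisfies the original formula. The blow-up is polynomial in the formula and in $f,k$, keeping the construction within the later complexity budget.
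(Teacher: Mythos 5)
Your encoding is the same as the paper's: segments of length $f$ (one slot per maximal chain, in a fixed enumeration), offset propositions marking the position within a segment, a canonical slot $f(x)$ for each attribute $x$, positional operators scaled by $f$, and class quantifiers $\C^r_x$ replaced by a bounded positional shift to the canonical slot followed by $\C$ on the linear attribute $y_{h(x)}$ with a corrected offset; like you, the paper observes that only same-direction operators are introduced, so the fragments $\NDLTL^\pm$ are preserved. Where you diverge is the ``delicate step'' of duplicate matches within a segment. The paper does \emph{not} collapse matches to unique representatives. Instead it first brings the formula into a normal form in which every $\X^=,\Y^=,\U^=,\S^=$ is directly preceded by a $\C^r_x$, using the rewriting $\alpha(\varphi)=\bigwedge_{x\in A}(@x\to\C^0_x\varphi)$; after translation the operands of class operators are then invariant across all slots of a frame, so it is harmless that $\mathrm{pos}$ of a bound tuple contains several slots per original position ($\U^=$/$\S^=$ quantify over all of them, and they all agree). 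For $\X^=$/$\Y^=$ the paper additionally chooses the $\C$-offsets ($fr-f(x)+f$ resp.\ $fr-f(x)$) so that the quantifier lands at a segment boundary, ensuring the strict next/previous match lies in a strictly later/earlier segment. Your alternative --- depth markers, leftmost-leaf representatives, and a structural conjunct forcing one navigable representative per depth-and-value --- can be made to work and yields a cleaner bijection between $\mathrm{pos}$-sets, but it is heavier machinery than the paper needs.

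One point in your write-up needs repair as stated: you cannot ``pair the level-$d$ value with a bit'' --- data values admit only equality tests, and the semantics of $\X^=$, $\U^=$ etc.\ matches on value tuples alone, so no tagging can change \emph{which} positions are in the class. What you can do is place the tags as propositions on slots and \emph{re-route within the class string}, e.g.\ replace $\X^=\varphi$ by $\X^=(\neg\mathit{rep}\U^=(\mathit{rep}\land\varphi))$ to skip non-representative matches (dually with $\S^=$ for the past fragment); since this uses only same-direction class operators it stays inside $\NDLTL^\pm$. With that correction, and noting that your structural induction plays the role the paper assigns to its normal form plus the frame-invariance invariant, your argument goes through.
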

\begin{proof}
Let $f$ be the number of maximal paths in the graph of $(A,\le)$.
We enumerate them and let $f(x)$, for $x\in A$, denote the lowest number $i$ such that $x$ occurs on the $i$-th such path.
Further, let $0\le h(x)<k$ be the level of $x$ in the corresponding tree, i.\,e.\, the distance to its root.
We let $A':= \lbrace y_0,…,y_{k-1}\rbrace$ be linearly ordered by $y_0\le'y_1\le'…\le'y_{k-1}$.

The idea is now, similar to the construction in Section \ref{ssc:bdltl2da}, to encode words over the tree order $(A,\le)$ into words over the linear order of attributes $[k]$, where we have segments of length $f$ corresponding to single positions in the original data word. 
Each position within a certain frame corresponds to a maximal path in the tree order $(A,\le)$.

We transform a given $\NDLTL^\pm$ formula $\Phi$ over attributes $(A,\le)$ and propositions $AP$ into an $\NDLTL^\pm$ formula $\hat{\Phi}$ over attributes $[k]$ and propositions $AP' := AP \cupdot\lbrace p_1,…,p_f\rbrace$.
The additional propositions are intended to mark the positions in each frame (i.\,e.\, modulo $f$), which can easily be expressed by an \LTL formula $\Phi_e$.

We can assume that $\Phi$ has a normal form where each $\X^=$, $\Y^=$, $\U^=$ and $\S^=$ formula is directly preceded by $\C^r_x$.
This is due to the equalities 
\begin{align*}
   \C^r_x\neg\varphi &\equiv (\neg\C^r_x \varphi) \land \beta_r,\\
   \C^r_x(\varphi\land\psi) &\equiv (\C^r_x \varphi) \land (\C^r_x \psi),\\
   \C^r_x\X\varphi &\equiv \X^r\varphi,\\
   \C^r_x\Y\varphi &\equiv \Y^r\varphi,\\
   \C^r_x(\varphi\U\psi) &\equiv \X^r(\varphi\U\psi),\\
   \C^r_x(\varphi\S\psi) &\equiv \X^r(\varphi\S\psi),\\
   \varphi\U^=\psi &\equiv \psi \lor (\varphi \land \X^= ( \alpha(\varphi) \U^= \alpha(\psi) )), \\
   \varphi\S^=\psi &\equiv \psi \lor (\varphi \land \Y^= ( \alpha(\varphi) \S^= \alpha(\psi) )), \\
   \X^=\varphi  &\equiv \X^=\alpha(\varphi), \text{ and} \\
   \Y^=\varphi  &\equiv \Y^=\alpha(\varphi),
\end{align*}
where $\beta_r$ is an \LTL formula checking that it is possible to move $r$ steps, $\alpha(\varphi) := \bigwedge_{x\in A}( @x \to \C^0_x \varphi )$ and, for $r<0$, $\X^r$ denotes $\Y^{-r}$ and vice versa.

Then, $\hat{\Phi} := \Phi_e \land t(\Phi)$ where we define $t(\Phi)$ as follows.
\begin{align*}
  t(p)   &= p \\
  t(\neg\varphi)  &= \neg t(\varphi) \\
  t(\varphi\land\psi) &= t(\varphi) \land t(\psi) \\
  t(\varphi\lor\psi) &= t(\varphi) \lor t(\psi) \\
  t(\X \varphi) &= \X^f t(\varphi) \\
  t(\Y\varphi) &= \Y^f t(\varphi) \\
  t(\varphi\U \psi) &= t(\varphi) \U t(\psi) \\
  t(\varphi\S \psi) &= t(\varphi) \S t(\psi) \\
  t(\C^r_x (\varphi \U^= \psi)) &= \bigwedge_{j=1}^f p_j \to \X^{f(x)-j}\C^{fr}_{y_{h(x)}} (\varphi \U^= \psi)\\
  t(\C^r_x (\varphi \S^= \psi)) &= \bigwedge_{j=1}^f p_j \to \X^{f(x)-j}\C^{fr}_{y_{h(x)}} (\varphi \S^= \psi)\\
  t(\C^r_x \X^= \varphi) &= \bigwedge_{j=1}^f p_j \to \X^{f(x)-j}\C^{fr-f(x)+f}_{y_{h(x)}} (\X^=\varphi)\\
  t(\C^r_x \Y^= \varphi) &= \bigwedge_{j=1}^f p_j \to \X^{f(x)-j}\C^{fr-f(x)}_{y_{h(x)}} (\Y^=\varphi)
\end{align*}

The correctness of the transformation can easily be seen by considering the underlying invariant, that the constructed formulae are evaluated equally on every position of a particular frame.
Note that under the transformation, any $\NDLTL^\pm$ formula stays in its respective fragment.
\end{proof}

For the final translation to \NDA we rely on the fact that we can check the additional propositions $=_j^x$.
This can be easily done using a register automaton. 
To obtain the type of automaton we need, such as \pNDA or \sNDA, we can then use the following lemma.

\begin{lemma}
For the linearly ordered set of attributes $[k]$ and a natural number $r\in\mathbb{N}$, let $\Sigma$ be a finite alphabet that includes a set $P = \lbrace=_j^x\mid-r\le j\le r, x\in A\rbrace\subseteq\Sigma$ of dedicated propositions.

Let $\mathcal{D}$ be a $k$-nested \NDA over $A$, $\Sigma$ and a data domain $\Delta$ and 
$E\subseteq(\Sigma\times\Delta^{[k]})^\infty$ be the language of all $[k]$-attributed data words $w = \binom{a_0a_1…}{\vec{d}_0\vec{d}_1…}$, $a_i\in\Sigma$, $\vec{d}_i\in\Delta^{[k]}$, such that
\[
  \forall_{0\le i<|w|} =_j^x\in a_i \text{ iff } \vec{d}_i|_{[x]} = \vec{d}_{i+j}|_{[x]}.
\]
We can construct a $k$-nested \NDA $\mathcal{D}'$ such that $\mathcal{D}' = \emptyset$ iff $\mathcal{D}\cap E = \emptyset$.
Further, if $\mathcal{D}$ is an \sNDA or an \pNDA then $\mathcal{D}$ is an \sNDA or an \pNDA, respectively.
\end{lemma}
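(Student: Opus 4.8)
The plan is to verify the propositions $=_j^x$ \emph{one attribute at a time}, adapting the register simulation of Lemma~\ref{lem:ra} to each level $x\in[k]$ and composing the resulting automata. The key observation is that $=_j^x$ holds at position $i$ iff positions $i$ and $i+j$ carry the same $[x]$-tuple, i.e.\ are read by the \emph{same} instance of the level-$x$ class automaton $\mathcal{B}_x$. Hence the annotation $=_j^x$ depends only on the $[x]$-class structure that $\mathcal{B}_x$ already tracks, and checking it reduces to a single-attribute register check on the $[x]$-projection, with the $[x]$-tuple playing the role of a single data value. First I would build a register automaton $\mathcal{R}_x$ maintaining a sliding window of the last $r$ tuples that, at each position, compares the current $[x]$-instance with the stored ones and certifies the flags. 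Since equality is symmetric, one comparison of positions $i$ and $i+j$ certifies simultaneously the backward flag $=_{-j}^x$ at $i+j$ and the forward flag $=_{+j}^x$ at $i$, so a window of size $r$ suffices for all $|j|\le r$.

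Next I would adapt the construction behind Lemma~\ref{lem:ra} to simulate $\mathcal{R}_x$ inside the \NDA. The alphabet is extended by a level-indexed separator $\$_x$; the base automaton $\mathcal{A}$ is put in product with the finite control of $\mathcal{R}_x$ and additionally tracks which of the $r$ register slots are currently occupied, emitting on its output the store/compare/free commands together with the slot being addressed. The level-$x$ class automaton $\mathcal{B}_x$ gains states recording that its instance is \emph{held} in a given slot; upon a compare command it checks whether the current instance is the addressed one, thereby validating the corresponding $=_j^x$ flag. As in Lemma~\ref{lem:ra}, a slot must be \emph{freed} before reuse, which is performed on an inserted $\$_x$-step on which the held instance unmarks itself. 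The essential new point for the nested setting is that such a $\$_x$-step must not disturb the other levels: I would let its data valuation \emph{replay a complete, previously seen $[k]$-tuple} whose $[x]$-restriction is the freed one, so that at every level $y\ne x$ the step is taken by an already active instance that merely idles via a $\$_x$-self-loop, while only the targeted $\mathcal{B}_x$-instance performs a real transition. Consequently no fresh instances are spawned at auxiliary positions and the refinement relation between consecutive class automata is preserved.

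Correctness then follows the usual two directions: from a word in $\mathcal{D}\cap E$ I obtain an accepting run of $\mathcal{D}'$ by inserting the $\$_x$-steps with replayed tuples wherever a slot is reused, and deleting all $\$_x$-steps from an accepting run of $\mathcal{D}'$ yields a word of $\mathcal{D}$ whose $=_j^x$ annotations are exactly correct, i.e.\ a word of $E$. Prefix/suffix closure carries over precisely as in Lemma~\ref{lem:ra}: each \emph{store} on an instance of $\mathcal{B}_x$ is matched by a later \emph{free} (and, conversely, reading a free one has already seen the matching store), so after adding the $\$_x$-self-loops all states of $\mathcal{B}_x$ may be declared final and B\"uchi-accepting (for \pNDA) or initial (for \sNDA) without changing the accepted language. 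Finally I would compose the checks for $x=1,\dots,k$ sequentially, each using its own separator $\$_x$ and ignoring the separators of the other levels; every step preserves the \sNDA/\pNDA shape, yielding a $\mathcal{D}'$ of polynomial size in $\mathcal{D}$ and exponential size in $r$.

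I expect the main obstacle to be exactly this coordination of the auxiliary $\$_x$-steps with the nested run semantics across all $k$ levels: a naive freeing step would either spawn spurious higher-level instances, which would then have to be forced to accept, or break the invariant that each $\mathcal{B}_i$ reads a refinement of the input of $\mathcal{B}_{i-1}$. The full-tuple replay together with the balanced store/free invariant is what resolves this, reducing the genuinely new reasoning to a single level while all remaining levels are handled by idle self-loops.
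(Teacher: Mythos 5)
Your proposal is correct and takes essentially the same route as the paper: one register automaton per level $x$ checking the $=_j^x$ flags (treating the $[x]$-tuple as a single data value), simulated via the construction of Lemma~\ref{lem:ra} in a way that involves only the base automaton and the $x$-th class automaton, so that suffix-/prefix-closedness of each class automaton is preserved. You moreover work out a detail the paper's proof leaves implicit—that the auxiliary $\$$-steps must carry a replayed, previously seen full $[k]$-tuple so that the class automata at levels $y\ne x$ merely idle on self-loops instead of spawning spurious instances—which is a sound resolution of exactly the point the paper glosses over with ``the construction only involves the base automaton and the respective $x$-th class automaton.''
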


\begin{proof}
For each attribute $x$ we can, in the same fashion as in Section \ref{ssc:bdltl2da}, build a dedicated register automaton that reads tuples of data values $(d_1,…,d_k)$ and checks the correct annotation of propositions $=_j^x$ wrt.\ to the values in the \mbox{$x$-th} level. 
Following the construction from Lemma \ref{lem:ra}, we can construct an \NDA that simultaneously simulates $\mathcal{D}$ and all of the register automata for each level $x$.
For each register automaton, the construction only involves the base automaton and the respective $x$-th class automaton.
As argued above, the single class automata remain suffix- or prefix-closed.
\end{proof}

\end{document}